\documentclass[submission,copyright,creativecommons]{eptcs}
\providecommand{\event}{NCMA 2026} % Name of the event you are submitting to

\usepackage{iftex}

\ifpdf
  \usepackage{underscore}         % Only needed if you use pdflatex.
  \usepackage[T1]{fontenc}        % Recommended with pdflatex
\else
  \usepackage{breakurl}           % Not needed if you use pdflatex only.
\fi
\usepackage{amsthm}
\usepackage{amsmath,amsfonts,amssymb,mathtools}
\usepackage{stmaryrd}
\usepackage{tabularx}
\usepackage{array}
\usepackage{longtable}
\usepackage[bottom]{footmisc}
\usepackage{tikzsymbols}
\usepackage{tikz}
\usetikzlibrary{automata, positioning}
\usetikzlibrary{arrows}
\usetikzlibrary{arrows.meta,cd}
\tikzset{
  hookrightarrow'/.style = {Hooks[left]->}
}
\usetikzlibrary{calc}
\usetikzlibrary{patterns}
\usetikzlibrary{decorations.pathreplacing}
\usepackage{subcaption}
\newcommand{\startofpushdown}{\triangleleft}

\makeatletter
\providecommand{\theHexample}{\the\c@example}

\theoremstyle{plain}
\newtheorem{theorem}{Theorem}[section]

\newtheorem{proposition}{Proposition}[section]

\theoremstyle{definition}
\newtheorem{definition}{Definition}[section]
\newtheorem{example}{Example}[section]
\newtheorem{remark}{Remark}[section]
\newtheorem{illustration}{Illustration}[section]

\title{2-Head 2D Returning Finite Automata}
\author{
Henning Fernau
\institute%
{Abteilung Informatikwissenschaften,\\
  Fachbereich 4, Universit\"at Trier,\\
  54286 Trier, Germany}
  \email{fernau@uni-trier.de}
  \and
Benedek Nagy
\institute%
{Eastern Mediterranean University \\ Famagusta, North Cyprus, via Mersin-10, Türkiye \\ Eszterházy Károly Catholic University, Eger, Hungary}
  \email{nbenedek.inf@gmail.com}
\and
R. Jennifer Rose
\institute%
{Madras Christian College\\
  Chennai -- 600059, Tamil Nadu, India}  \email{jennifer.royson@gmail.com}  
  \and
Robinson Thamburaj
\institute%
{Madras Christian College\\
  Chennai -- 600059, Tamil Nadu, India}
  \email{robinson@mcc.edu.in}
  \and
D. Gnanaraj Thomas
\institute%
{Madras Christian College\\
  Chennai -- 600059, Tamil Nadu, India}
  \email{dgthomasmcc@yahoo.com}
  }
\def\titlerunning{2-Head 2D Returning Finite Automata}
\def\authorrunning{H. Fernau, B. Nagy, R.J. Rose, R. Thamburaj \& D.G. Thomas}
\def\copyrightholders{Fernau, Nagy, Rose, Thamburaj, Thomas}

\begin{document}
\maketitle

\renewcommand{\arraystretch}{1.3}

\begin{abstract}
We introduce and study a family of two‑head finite automata called  two‑head returning finite automata (2‑HRFA) operating on rectangular arrays of picture languages, in which both heads move in opposite directions. We show that the class of picture languages accepted by 2‑HRFA is incomparable with the class of languages generated by context‑free matrix grammars (CFMG), 
while it forms a proper subset of the class of languages accepted by returning pushdown automata (RPDA). 
 In addition, we define a constrained variant, both‑head stepping two‑head returning finite automata (B2-HRFA), in which both heads are required to move in a synchronized, stepwise fashion. We prove that the class of languages accepted by  returning finite automata (RFA) is a proper subset of the class of languages accepted by B2-HRFA, which in turn is a proper subset of the class of languages accepted by 2‑HRFA. Closure properties for both the families of languages accepted by 2‑HRFA and B2-HRFA are also investigated.
\end{abstract}

\section{Introduction}

The study of multi-head automata has played an important role in the theory of computation for understanding the computing capabilities of finite state devices. The multi-head automata introduced by Rosenberg \cite{Rosenberg1966} extend the classical one-head model by equipping the automaton with several reading heads where each reading head can move only left to right and is independent of each other, resulting in a strict hierarchy of language classes.

Two-head automata represent an important subclass of multi-head machines. Hromkovic studied two-head deterministic finite state automata~\cite{Hromkovic1985}, establishing the fundamental results on their computational complexity and demonstrating that one-way two-head automata can recognize languages that are not recognized by single-head machines.
2-head models were studied by Loukonova \cite{Loukanova2007} and by Nagy \cite{
NagyiConcept2010} for accepting the linear context-free languages. Deterministic variants accept a language class referred as 2detLIN, which is properly between the classes REG
of regular and LIN of linear languages~\cite{NagyParchami2021}.
Here we consider these automata for processing 2-dimensional rectangular inputs.  

The theory of string recognizability by automata is generalized to two dimensions in \cite{BlumHewitt1967,GiaRes97}. For recognizing rectangular-shaped pictures (or arrays), returning finite automata (RFA) were introduced by Fernau \emph{et al.} \cite{FerPST2018} as simpler finite-state devices, scanning rectangular pictures row by row, each row always from left to right. It was shown that the class of languages recognized by RFA is precisely the transpose of the class of languages generated by regular matrix grammars (RMG) introduced by Siromoney \emph{et al.} much earlier~\cite{SirSirKri72}. Recently, Fernau \emph{et al.} \cite{FerJen2025} introduced and studied   returning pushdown automata (RPDA) which extend the model of returning finite automata by equipping them with a pushdown store. 
This combination of returning behavior and pushdown store increases the expressive power beyond that of RFA and also beyond transposes of context-free matrix languages as defined in~\cite{SirSirKri72}.

In this paper, we introduce  two‑head returning finite automata (2‑HRFA) for recognizing  picture languages, in which the two heads scan a rectangular picture from opposite ends.\enlargethispage{4\baselineskip}\footnote{These automata are essentially different from the model introduced and studied in \cite{FerFreHol98c,FerFreHol99}, as the notion of an \emph{array} differs.} Specifically, the first head reads the picture from the top, moving left to right on each row, while the second head reads the picture from the bottom, moving right to left on each row and the automaton accepts the picture when the two heads meet. We compare the language‑theoretic properties of 
 the classes of 2‑HRFA with returning finite automata (RFA), regular matrix grammars (RMG), context-free matrix grammars (CFMG), returning pushdown automata (RPDA), and the constrained variant of both‑head stepping two‑head returning finite automata (B2-HRFA). We show that 2‑HRFA recognizes a strictly larger class of languages than RFA and that the class of languages accepted by 2‑HRFA is incomparable with the classes generated by RMG and CFMG. Furthermore, we establish that the class of languages recognized by 2‑HRFA lies strictly within the class of languages accepted by RPDA, while the class of languages accepted by B2-HRFA forms an intermediate level between the classes of languages accepted by RFA and 2‑HRFA, thereby refining the hierarchy among the families of returning finite‑state and stack‑based models.

\section{Preliminaries}
In this section, we recall some definitions related to picture languages, matrix grammars and two-head automata, returning finite automata and returning pushdown automata \cite{FerJen2025,FerPST2018,GiaRes97,GinSpa66a,KriSir74a,Nagy2012}.

Let $\Sigma$ be a finite alphabet. A string or a word over $\Sigma$ is a finite sequence of symbols from $\Sigma$. The set of all non-empty strings over $\Sigma$ is denoted by $\Sigma^+$ and the empty string is $\varepsilon$; $\Sigma^* = \Sigma^+ \cup \{\varepsilon\}$. 
Further, let $\mathbb N$ denote the set of positive integers.

\begin{definition}
   A 2-Head Finite Automaton (2-HFA) in which the two heads read a word in opposite directions is defined as $A=(Q,\Sigma,\delta,s,F)$ where $Q$ is the finite set of states, $\Sigma$ is the input alphabet, $\delta:Q\times (\Sigma\cup \{\varepsilon\})\times (\Sigma\cup \{\varepsilon\})\rightarrow 2^Q$ is the transition function, $s\in Q$ is the start state, $F\subseteq Q$ specifies the final states.   
\end{definition}   

Suppose $a_1a_2\cdots a_n$ is an input word. The 2-HFA starts the computation in the initial state $s$ reading either a pair of symbols $(a_1,a_n)$ where the first head reads the symbol $a_1$ from the left end and the second head reads the symbol $a_n$ from right end and enters the next state according to the transition function $\delta$ or reading the pair $(a_1,\varepsilon)$ when only the first head moves or the pair $(\varepsilon,a_n)$ when only the second head moves. (We do not allow transitions by the pair $(\varepsilon,\varepsilon)$ as such transitions can be eliminated without affecting the class of accepted languages.)
The 2-HFA continues to read the remaining symbols of the input in a similar manner. An input word is accepted by a 2-HFA if every letter of the word is read by one of the heads (i.e., the heads meet) and the automaton is in a final state. In transitions, a pair of symbols $(a,b)\in\Sigma^2$ is assigned to the arrows between two states which means that when moving from one state to  another one, the first head reads the symbol $a$, the second head reads $b$ and both heads step forward. 

\begin{example}
    The language $L=\{a^nb^n\mid n>1\}$ is
    recognized by the 2-HFA given in \figurename~\ref{fig:Lsubatonbton}. Here, the two heads meet at $s_2$ to accept a word of the form $a^nb^n$.
\end{example}
\begin{figure}[t]
		\centering
		% States
				\begin{tikzpicture}[shorten >=1pt, node distance=7em, on grid, auto,thick,initial text=]

% Define states
					\node[state, initial] (s) {$s$};
					\node[state, right=of s] (s1) {$s_1$};
					\node[state, right=of s1,accepting] (s2) {$s_2$};
% Transitions
					\draw[->] (s) edge node[align=center] {$(a,b)$} (s1);
                    \draw[->] (s1) edge node[align=center] {$(a,b)$} (s2);
                    \draw[->] (s2) edge [loop above]node[align=center] {$(a,b)$} ();
                     \end{tikzpicture}
		\caption{A 2-head finite automaton that accepts the language $a^nb^n$ with $n>1$. \label{fig:Lsubatonbton}}
	\end{figure}

It is well-known that linear languages can be characterized by linear context-free grammars as well as by pushdown automata whose pushdown store makes only one turn, i.e., it only once switches between a pushing and a popping phase. 

    \begin{theorem}\label{thm:2HFA}
        A language is linear if and only if it is accepted by a 2-head finite automaton if and only if it is accepted by a  one-turn pushdown automaton.
    \end{theorem}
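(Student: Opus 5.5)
The plan is to prove the two directions linear $\Leftrightarrow$ 2-HFA directly, and then to invoke the classical equivalence between linear grammars and one-turn pushdown automata (Ginsburg and Spanier~\cite{GinSpa66a}) for the third statement. A direct simulation between 2-HFA and one-turn PDA would also work: the pushdown stores the part read by the second head in reverse. Routing everything through linear grammars is cleaner, so that is the route taken.

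\emph{From linear grammars to 2-HFA.} Take a linear grammar $G=(N,\Sigma,P,S)$. First normalize it so that every rule has one of the forms $A\to aB$, $A\to Ba$, or $A\to a$ with $a\in\Sigma$. The empty word, if present, is handled by also making the start state final. The normalization is standard: split $A\to uBv$ into chains of single-terminal rules, remove chain rules $A\to B$, and remove $\varepsilon$-rules as usual.

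Now build the 2-HFA as follows:
\begin{itemize}
\item the states are $N\cup\{q_f\}$, the start state is $S$, and $F=\{q_f\}$;
\item a rule $A\to aB$ gives the transition $B\in\delta(A,a,\varepsilon)$;
\item a rule $A\to Ba$ gives $B\in\delta(A,\varepsilon,a)$;
\item a rule $A\to a$ gives $q_f\in\delta(A,a,\varepsilon)$.
\end{itemize}
We then show by induction on derivation length that $S\Rightarrow^* uAv$ holds iff the automaton can reach state $A$ with the first head having consumed $u$ and the second head having consumed $v$ (read from the right). Once the terminating rule fires, the heads have met exactly when the whole word has been consumed. This gives the correctness of the construction.

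\emph{From 2-HFA to linear grammars.} Given $A=(Q,\Sigma,\delta,s,F)$, introduce a nonterminal $X_q$ for each state $q$ and take start symbol $X_s$. The rules are:
\begin{itemize}
\item for each $p\in\delta(q,a,b)$, the rule $X_q\to aX_pb$, where $a$ or $b$ may be $\varepsilon$;
\item for each $q\in F$, the rule $X_q\to\varepsilon$.
\end{itemize}
The same induction as before shows that $X_s\Rightarrow^* uX_qv$ iff some run reaches $q$ with $u$ consumed from the left and $v$ consumed from the right. Acceptance means the heads have met, i.e.\ $w=uv$ with $q\in F$, and this corresponds exactly to applying $X_q\to\varepsilon$.

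The main subtlety lies in the meeting condition. When both heads move in one step on a pair $(a,b)$, they must read distinct positions; in particular the automaton must not read one remaining letter with both heads. The grammar rule $X_q\to aX_pb$ produces two distinct positions automatically, so the correspondence holds provided the semantics are fixed so that a pair transition needs at least two unread letters. I would state this convention explicitly and check that it matches the example in \figurename~\ref{fig:Lsubatonbton}.
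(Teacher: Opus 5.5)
Your proposal is correct. The paper gives no proof of this theorem. It quotes the result as well known: the one-turn pushdown characterization goes back to Ginsburg and Spanier, and the 2-head characterization to Loukanova and Nagy. Your argument therefore goes further than the paper by actually carrying out the linear-grammar/2-HFA equivalence, while still citing the classical result for the one-turn PDA part, exactly as the paper implicitly does.

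The two translations you give are the standard ones:
\begin{itemize}
\item grammar to 2-HFA: the one-terminal-per-rule normal form is turned into one-head moves;
\item 2-HFA to grammar: each transition $p\in\delta(q,a,b)$ becomes $X_q\to aX_pb$, and each $q\in F$ gives $X_q\to\varepsilon$.
\end{itemize}
You also isolate the one point that needs care, the meeting condition. Requiring two distinct unread positions for a two-head move matches the convention the paper states explicitly for 2-HRFA.

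\textbf{One small repair.} Accepting $\varepsilon$ by making the start state final is only sound if the start symbol occurs on no right-hand side. Otherwise the automaton also accepts $uv$ whenever $S\Rightarrow^+ uSv$. For example, $S\to aA\mid a$, $A\to aS$ generates $a(aa)^*$, but with $S$ made final the automaton also accepts $aa$. Introduce a fresh start symbol carrying copies of the $S$-rules, equivalently a fresh start state, before declaring it final.
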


    The deterministic version of a 2-HFA is weaker than the non-deterministic version as can be seen by the linear language $\{a^nb^n\mid n\geq 1\}\cup \{a^{3n}b^n\mid n\geq 1\} $ which cannot be accepted by a deterministic 2-HFA~\cite{NagyParchami2021,NagyParchamiSadeghiAFL17}.
    The following example not only illustrates determinism but also the role of transitions where only one of the heads is moving, e.g., by  $(a,\varepsilon)$. 
  
\begin{example} \label{ex:Lsub{w=wtor}}
    The language $\{ww^R\mid w\in \Sigma^* \text{ and $w^R$ is the reversal of the word $w$}\}$ is accepted by a 2-HFA having a single state with the transition $(a,a)$ for each letter $a\in\Sigma$. The language of palindromes is accepted by another 2-HFA by adding a second accepting state to the previous 2-HFA and joining the two states with the transition $(a,\varepsilon)$ (for each $a\in\Sigma$). If the automaton should also work deterministically, we need, apart from the start state $q_0$, $|\Sigma|$ many states $q_a$ that are entered from $q_0$ when reading $(a,\varepsilon)$ and lead from $q_a$ into $q_0$ again when reading $(\varepsilon,a)$. All states are then accepting.
\end{example}
   
\begin{figure}[b]
    \centering
    \begin{minipage}{.48\textwidth}
        $$P \obar Q =  \begin{array}{ccc} a_{1,1} & \cdots & a_{1,n} \\ 
		\vdots & \ddots & \vdots \\ 
		a_{m,1} & \cdots & a_{m,n}
	\end{array} 
	\begin{array}{ccc} b_{1,1} & \cdots & b_{1,n'} \\ 
		\vdots & \ddots & \vdots \\ 
		b_{m',1} & \cdots & b_{m',n'}
	\end{array}$$
    \end{minipage}\quad 
     \begin{minipage}{.48\textwidth}
        $$P \ominus Q = \begin{array}{ccc} a_{1,1} & \cdots & a_{1,n} \\ 
		\vdots & \ddots & \vdots \\ 
		a_{m,1} & \cdots & a_{m,n}\\
		b_{1,1} & \cdots & b_{1,n'} \\ 
		\vdots & \ddots & \vdots \\ 
		b_{m',1} & \cdots & b_{m',n'}
	\end{array}$$
    \end{minipage}
    \caption{Explaining two concatenations of $P = [a_{i,j}]_{m,n}$ and $Q = [b_{i,j}]_{m',n'}$.}
    \label{fig:concats}
\end{figure}
A two-dimensional string or a picture is a two-dimensional rectangular array over $\Sigma$ represented as 
	$ P = \begin{array}{ccc}a_{1,1} & \cdots & a_{1,n} \\ 
		\vdots & \ddots & \vdots \\ 
		a_{m,1} & \cdots & a_{m,n}
	\end{array} = [a_{i,j}]_{m,n}$
	where $m, n\in $\( \mathbb{N} \); $a_{i,j} \in \Sigma$ refers to the $j$th symbol in the $i$th row of $P$ with $1 \leq i \leq m, 1 \leq j \leq n$. The number of rows and columns of the picture is given by $\left|P \right|_r = m, \left|P \right|_c = n$, respectively.	The set of all pictures over $\Sigma$ is denoted by $\Sigma_+^+$. A picture language over $\Sigma$ is a subset of~$\Sigma_+^+$. Let $P = [a_{i,j}]_{m,n}$ and $Q = [b_{i,j}]_{m',n'}$ be two pictures over $\Sigma$. The column concatenation of $P$ and~$Q$ is defined if $m = m'$ and is denoted by $P \obar Q$. The row concatenation of $P$ and~$Q$ is defined if $n = n'$ denoted by $P \ominus Q$. Both are explained in Figure~\ref{fig:concats}. We use power notation like $P^n$ to denote $(n-1)$-fold column catenation of $P$ with itself, and similarly $P_n$ to denote $(n-1)$-fold row catenation of $P$ with itself.
    
In contrast to the 1-dimensional (string or word) case, also geometric operations are important when dealing with arrays. In this paper, we are focusing on the four operations of transpose, written $T$, vertical and horizontal reflections (here, we use subscripts $VR$ and $HR$) and rotation by $180^\circ$, indicated by the superscript~$R$. The operations are explained in Figure~\ref{fig:geometric-ops}.
More operations can be defined according to the dieder group $D_8$, see \cite{FerParTho2018}. Notice that in the case of single-row arrays (that can be viewed as words), $P^R=P_{VR}$ is indeed the reversal operation discussed above. The operations on pictures can naturally be extended to array languages. 

\begin{figure}[tb]
\begin{minipage}{.23\textwidth}
$$T(P) = \begin{array}{ccc}
		a_{1,1} & \cdots & a_{m,1} \\
		\vdots & \ddots & \vdots \\ 
		a_{1,n} & \cdots & a_{m,n}
	\end{array}$$
\end{minipage}\quad \begin{minipage}{.23\textwidth}
$$P_{VR} = \begin{array}{ccc}
		a_{1,n} & \cdots  & a_{1,1} \\ 
		\vdots & \ddots & \vdots \\ 
		a_{m,n} & \cdots & a_{m,1}
	\end{array}$$
\end{minipage}\quad\begin{minipage}{.23\textwidth}
$$P_{HR} = \begin{array}{ccc}
		a_{m,1}  & \cdots & a_{m,n} \\ 
		\vdots  & \ddots & \vdots \\ 
		a_{1,1}  & \cdots & a_{1,n}
	\end{array}$$
\end{minipage}\quad\begin{minipage}{.23\textwidth}
$$P^R = \begin{array}{ccc}
		a_{m,n}  & \cdots & a_{m,1} \\ 
		\vdots  & \ddots & \vdots \\ 
		a_{1,n}  & \cdots & a_{1,1}
	\end{array}$$
\end{minipage}
    \caption{Some geometric unary operations applied to $P = [a_{i,j}]_{m,n}$.}
    \label{fig:geometric-ops}
\end{figure}
 
Next, we introduce two-dimensional grammars according to \cite{SirSirKri72}.

     \begin{definition}
        A two-dimensional right-linear matrix grammar (or RMG) [context-free matrix grammar (or CFMG)] is defined by a 7-tuple $G= (V_h,V_v,\Sigma_I,\Sigma,S,R_h,R_v)$ where $V_h$ is a finite set of horizontal variables,  $V_v$ is a finite set of vertical variables, $\Sigma_I\subseteq V_v$ is a finite set of intermediates,  $\Sigma$ is a finite set of terminals, $S\in V_h$ is the start symbol, $R_h$ is a finite set of horizontal rules of the form $V\rightarrow AV^\prime$ or $V\rightarrow A$ where $V, V^\prime\in V_h$ and $A\in \Sigma_I$ for RMG [or $V\rightarrow B$ where $V\in V_h$ and $B\in (V_h\cup \Sigma_I)^*$ for CFMG], $R_v$ is a finite set of vertical rules of the form $W\rightarrow aW^\prime$ or $W\rightarrow a$ where $W,W^\prime\in V_v$ and $a\in \Sigma$.                       
    \end{definition}
The derivation is carried out in two phases. In the first phase, the string grammar $G_h=(V_h, \Sigma_I,S,R_h)$ generates a string language $H(G)$ over the alphabet $\Sigma_I$. The strings in $H(G)$ form the top row of the picture. In the second phase, treating each intermediate symbol as a start symbol, the vertical generation of the columns of the picture is done in parallel by applying the rules in $R_v$. This parallel application of rules from $R_v$ ensures that rules of the form $V_i\rightarrow a_i$ are all applied simultaneously in each column. These grammars make sure that the columns can grow only in one direction.

    \begin{definition}
    A Returning Finite Automaton (RFA) is defined as a 7-tuple $M=(Q,\Sigma,\delta,s,F,\#,\square)$ where $Q$ is the finite set of states, $\Sigma$ is the input alphabet, $\delta:Q\times(\Sigma\cup \{\#\})\to 2^Q$ is the transition function, $s\in Q$ is the start state, $F$ is the set of final states, $\#\notin \Sigma$ serves as the boundary symbol and $\square$ indicates an erased (i.e., an already visited) position.
\end{definition}
An RFA processes the input picture row by row always from left to right as shown in \figurename~\ref{fig:RFA}. It should be noted that after the automaton reads the rightmost $\#$ in a row, it continues to read the first element after the leftmost $\#$ in the next row. A picture is accepted by a RFA if the automaton begins in the start state, executes the computation according to the transition function and enters a final state after reading the last symbol of the picture (which is not $\#$). The roles of $\#$ and $\square$ will be similar in other automaton models presented in this paper.

\begin{example}\label{ex:LsubL}
  The set $L_{\textrm{L}}$ of tokens $\textrm{L}$ of different sizes and proportions formally defined as $$L_{\textrm{L}} =\left\{(X\obar(\bullet)^{n})_{m}\ominus X^{n+1}\mid  n,m\geq 1 \right\}\,.$$   can be accepted by the RFA $M_{\textrm{L}}=(\{s,s_1,s_2,s_3\},\{X,\bullet\},\delta ,\{s\},\{s_2\},\#,\square)$, with $\delta$ defined by  $(s,X)\mapsto \{s_3\}$, $(s_3,\bullet)\mapsto \{s_1\}$, $(s_1,\bullet)\mapsto \{s_1\}$, $(s_1,\#)\mapsto\{s,s_2\}$, $(s_2,X)\mapsto \{s_2\}$. Notice that $M_{\textrm{L}}$ does not accept any single-row array.
\end{example} 

\begin{example}\label{ex:Lww} 
The language $L_{w\obar w}=\{(a[1]\obar a[1])\ominus(a[2]\obar a[2])\ominus\cdots\ominus(a[m]\obar a[m])\mid m\geq 1,\,\forall i\in\{1,\dots,m\}: a[i]\in\{0,1\}\}$ can be accepted by an RFA with states $s,s_0,s_1,s_\#$ and transition function $(s,0)\mapsto\{s_0\}$, $(s,1)\mapsto\{s_1\}$, $(s_0,0)\mapsto\{s_\#\}$, $(s_1,1)\mapsto\{s_\#\}$, $(s_\#,\#)\mapsto\{s\}$ with start and final state~$s$.
\end{example}

\begin{figure}
    \centering
    \scalebox{.95}{\begin{minipage}{.35\textwidth}\mbox{ }\\[-5ex]
    
$\begin{array}{ccccccc}
		\#&\underset{\rightarrow}{0}& \underset{\rightarrow}{0}& \underset{\rightarrow}{0} &
        \underset{\rightarrow}{0}& \underset{\rightarrow}{0}&
        \underset{\hookleftarrow}{\underset{\rightarrow}{\#}}\\ 
        
 	\#&\underset{\hookrightarrow}{0}& \underset{\rightarrow}{0}& 
    \underset{\rightarrow}{0} &
        \underset{\rightarrow}{0}& \underset{\rightarrow}{0}&
       \underset{\hookleftarrow}{\underset{\rightarrow}{\#}}\\ 
        
	\#&\underset{\hookrightarrow}{1}& \underset{\rightarrow}{1}& \underset{\rightarrow}{1} &
        \underset{\rightarrow}{0}& \underset{\rightarrow}{1}& \underset{\hookleftarrow}{\underset{\rightarrow}{\#}}\\ 
        
    {\#}&\underset{\hookrightarrow}{0}& \underset{\rightarrow}{0}& \underset{\rightarrow}{0} &
        \underset{\rightarrow}{0}& \underset{\rightarrow}{0}&\underset{\hookleftarrow}{\underset{\rightarrow}{\#}}\\  
        
      {\#}&\underset{\hookrightarrow}{0}& \underset{\rightarrow}{0}& \underset{\rightarrow}{0} &
        \underset{\rightarrow}{0}& \underset{\rightarrow}{0}&\#\\

\end{array}	$	
 \end{minipage}}\quad\scalebox{.95}{\begin{minipage}{.35\textwidth}
$\begin{array}{ccccccc}
		\#&\square& \square& \square &
       \square& \square&
        \# \\ 
        
 	\#&\square& \square& 
    \square &\square& \square&\# \\ 
        
	\#&\square& \square& \square &
       \underset{\rightarrow} {0}&\underset{\rightarrow}{1}&\underset{\hookleftarrow}{\underset{\rightarrow}{\#}}\\ 
        
     \#&\underset{\hookrightarrow}{0}& \underset{\rightarrow}{0}&\underset{\rightarrow}{0} &
        \underset{\rightarrow}{0}& \underset{\rightarrow}{0}&\underset{\hookleftarrow}{\underset{\rightarrow}{\#}}\\ 
        
     \#&\underset{\hookrightarrow}{0}&\underset{\rightarrow}{0}&\underset{\rightarrow}{0} &\underset{\rightarrow}{0}&\underset{\rightarrow}{0}&
     \#\\

\end{array}	$
 \end{minipage}}\\[1ex]    
    \caption{RFA processing an input}
    \label{fig:RFA}
\end{figure} 
\begin{definition}
     A Returning Pushdown Automaton (RPDA) is defined as a 7-tuple $M=(Q,\Sigma, \Gamma, \delta,q_{0},\linebreak[3] \startofpushdown,\#,\square)$ where
  $Q$ is the finite set of states, $\Sigma$ is the input alphabet with $\#\notin\Sigma$, 
  $\Gamma$ is the pushdown alphabet, $\delta: Q\times(\Sigma \cup \{\#,\varepsilon\})\times \Gamma \rightarrow 2^{Q \times \Gamma^{*}}$ is the transition function, $q_{0}\in Q$ is the initial state and $\startofpushdown \in\Gamma$ is the start symbol of the pushdown store.  		
\end{definition}

 An RPDA also processes the input picture row by row, always from left to right. A computation by an RPDA is accepted if it has an empty pushdown store after completely processing the picture.
 
 If $X$ is any model (automaton or grammar) for describing pictures, then $\mathcal{L}(X)$ is the family of all non-empty picture languages that can be described by $X$.
 
 We see that $\mathcal{L}(\mathrm{RFA})= T(\mathcal{L}(\mathrm{RMG}))$ in \cite{FerPST2018}. Later, Fernau \emph{et al.} \cite{FerJen2025} studied the relations among the three classes  $\mathcal{L}(\mathrm{RPDA})$, $T(\mathcal{L}(\mathrm{CFMG}))$ and $\mathcal{L}(\mathrm{CFMG})$ since RPDA and CFMG are the extended models of RFA and RMG, respectively.
 We recall the following results from \cite{FerJen2025}.
 \begin{theorem}
      $ T(\mathcal{L}(\mathrm{CFMG})) \subsetneq \mathcal{L}(\mathrm{RPDA})$. 
 \end{theorem}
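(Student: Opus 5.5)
We prove the inclusion by simulation and the strictness by exhibiting a separating language.

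\emph{Inclusion.} Let $L=\mathcal{L}(G)$ for a CFMG $G=(V_h,V_v,\Sigma_I,\Sigma,S,R_h,R_v)$. A picture in $T(L)$ is obtained by transposing $G$'s output, so its rows are the columns generated by $G$. Hence row $i$ of the transposed picture is a word $a_1\cdots a_m$ produced from some intermediate $A_i\in\Sigma_I$ by the right-linear vertical rules. The sequence $A_1A_2\cdots A_n$ of intermediates, read from top to bottom, is a word of the context-free language $H(G)$. The parallel-termination condition says that all these right-linear derivations have the same length $m$, and that equality is exactly the row-length equality of a picture.

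The RPDA therefore works as follows:
\begin{itemize}
\item It simulates an ordinary pushdown automaton for $H(G)$ on the stack, guessing the intermediate $A_i$ at the start of row $i$.
\item It feeds $A_i$ to the stack simulation as the next input letter, using $\varepsilon$-moves.
\item It then reads row $i$ from left to right in its finite control, simulating the right-linear grammar started at $A_i$. It must end with a terminating rule $W\to a$ exactly when it reaches $\#$.
\end{itemize}
Rows are read in full, so equal row lengths are guaranteed by the picture format, and the vertical rules impose nothing further. At the end, acceptance by empty stack is used to check that the guessed string $A_1\cdots A_n$ lies in $H(G)$. To make this work, we first convert the PDA for $H(G)$ to one accepting by empty stack that consumes its input through $\varepsilon$-moves. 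The stack is never disturbed while a row is scanned, since that phase only uses the finite control.

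\emph{Strictness.} We need a language in $\mathcal{L}(\mathrm{RPDA})\setminus T(\mathcal{L}(\mathrm{CFMG}))$. A good candidate uses the stack to compare information inside a single row, which a CFMG cannot do in a column: its columns are regular, and a column of a CFMG picture is a row of the transpose. Take, for example, single-row pictures (or all rows) of the form $a^kb^k$. The RPDA pushes $a$'s and pops them on $b$'s within a row, ending with an empty stack at each $\#$.

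It remains to show this language is not in $T(\mathcal{L}(\mathrm{CFMG}))$. Each row of a transposed CFMG picture is derived by a right-linear grammar from some intermediate. Hence the set of all rows that occur is contained in a finite union of regular languages, and conversely every row lies in a regular language determined by $G$. If all pictures of width one and rows $a^kb^k$ were generated, the rows would form the non-regular language $\{a^kb^k\}$ as a projection of a regular one. This is the main obstacle: the set of rows is not obviously exactly a regular language, because of the length synchronisation.

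To handle it, restrict to single-row pictures, i.e.\ $n=1$ in the transpose. These correspond to $G$ pictures with exactly one column. Such a picture arises from a single intermediate $A$ with $A\in H(G)$, and its column is any word derived from $A$ by the vertical rules, with no synchronisation constraint at all. So the single-row members of $T(L)$ form the finite union, over $A\in\Sigma_I\cap H(G)$, of the regular languages generated from $A$, and this union is regular. Since $\{a^kb^k\mid k\ge1\}$ is not regular, the language of single-row pictures $a^kb^k$ is not in $T(\mathcal{L}(\mathrm{CFMG}))$, which gives strictness.
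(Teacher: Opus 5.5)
\textbf{Verdict.} Your proof is correct. The paper itself gives no proof of this theorem: it only recalls the result from the earlier work on RPDA. So there is no in-paper argument to compare your route with, and I judged the proposal on its own.

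\textbf{Inclusion.} The key observation is right. Siromoney's parallel-termination condition amounts to nothing more than all columns having the same length, and for the rows of a picture this is automatic. That reduces the RPDA to two components running side by side: a PDA for the context-free language $H(G)$ on the stack, and an NFA for the right-linear column grammar in the finite control.

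\textbf{Strictness.} Restricting to single-row pictures is exactly the right move. Such pictures come from length-one words $A\in H(G)\cap\Sigma_I$ and involve no synchronisation. Hence the single-row part of any language in $T(\mathcal{L}(\mathrm{CFMG}))$ is regular, and the one-row language $\{a^kb^k\mid k\ge 1\}$ separates the two families. This fits the paper's own reasoning with one-row witnesses in the proofs of Theorem~\ref{thm:2-hrfavsRPDA} and Theorem~\ref{thm:2hrfa-T(cfmg)}.

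\textbf{Two details to state more carefully.}

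First, every RPDA move needs a top-of-stack symbol, so the pushdown must not become empty before the last symbol of the picture has been read. Suppose you feed $A_n$ to the simulated PDA at the start of the last row and that PDA empties its stack there; then the RPDA is blocked and cannot scan row $n$. You should make the scheduling explicit. Either keep a fresh bottom marker and perform the final popping $\varepsilon$-moves only after the last row has been scanned and the finite control has checked that a terminating rule $W\to a$ was used on its last symbol. Or, equivalently, feed $A_i$ to the PDA simulation after scanning row $i$ rather than before.

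Second, for the same reason, ``ending with an empty stack at each $\#$'' in your multi-row variant is not literally possible, because the RPDA could then not read the $\#$. It should read ``returning to the bottom marker.'' For the single-row witness you actually use, this does not arise. The RPDA simply has no $\#$-transition, so multi-row inputs are rejected, and it pops its bottom marker only after at least one $b$ has been read and all pushed $a$'s have been matched.
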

    \begin{proposition}\label{prp:CFMG-APDA}
The two array language families $ \mathcal{L}(\mathrm{RPDA})$  and 
		$\mathcal{L}(\mathrm{CFMG}) $ are incomparable.
\end{proposition}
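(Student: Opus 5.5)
Incomparability needs two witness languages: one in $\mathcal{L}(\mathrm{RPDA})\setminus\mathcal{L}(\mathrm{CFMG})$ and one in $\mathcal{L}(\mathrm{CFMG})\setminus\mathcal{L}(\mathrm{RPDA})$. The first direction comes almost for free. By the preceding theorem, $T(\mathcal{L}(\mathrm{CFMG}))\subseteq\mathcal{L}(\mathrm{RPDA})$, so any $L\in\mathcal{L}(\mathrm{CFMG})$ with $T(L)\notin\mathcal{L}(\mathrm{CFMG})$ gives the witness $T(L)$. CFMG generate rows by a context-free grammar but columns only by right-linear rules applied in parallel. So we take $L_1=\{(a^nb^n)_m \mid n,m\ge 1\}$, i.e.\ $m$ identical rows $a^nb^n$.

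\textbf{Showing $L_1\in\mathcal{L}(\mathrm{CFMG})$.} Use the horizontal rules $S\to ASB\mid AB$ and the vertical rules $A\to aA\mid a$, $B\to bB\mid b$.

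\textbf{Showing $T(L_1)\in\mathcal{L}(\mathrm{RPDA})$.} The transpose consists of $2n$ rows: $n$ rows $a^m$ followed by $n$ rows $b^m$. An RPDA checks row by row that each row is constant, that the $a$-rows precede the $b$-rows, and that all rows have equal length; the latter is implicit, since pictures are rectangular. It pushes one symbol per $a$-row and pops one per $b$-row, and accepts with empty pushdown exactly when the counts match.

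\textbf{Showing $T(L_1)\notin\mathcal{L}(\mathrm{CFMG})$.} In any CFMG the columns of a generated picture have the form: the top symbol is an intermediate, and the column word is produced by a regular (right-linear) grammar synchronised only through simultaneous termination. A column of $T(L_1)$ is $a^nb^n$, with $n$ unbounded, so it would have to be generated by a right-linear vertical grammar. Take a picture in $T(L_1)$ with a single column ($m=1$) and $n$ larger than the number of vertical nonterminals. A pumping argument on the vertical derivation inside the $a$-block lets us repeat a cycle in every column simultaneously. Because all columns are derived in parallel and must terminate at the same row, we pump the same cycle length in each column; with one column this is trivially consistent. The result is a picture with more $a$'s than $b$'s in that column, which lies outside $T(L_1)$. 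This pumping step is the main care point: the parallel synchronisation must not be violated, and pumping one column must keep a legal derivation. A one-column instance avoids the issue.

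\textbf{The other direction.} We need a CFMG language whose pictures an RPDA cannot check, because the RPDA reads row by row while the dependency is across rows and across columns. Take $L_2=\{(a^nb^n)\ominus(a^nb^n)\ominus\cdots\}$ with rows of length $2n$ and arbitrarily many of them; this is essentially $L_1$ again. An RPDA must verify $a^nb^n$ in the first row, but it must do so in every row. A single pushdown can verify one row of the form $a^nb^n$ per pass and then reuse it, since the stack empties after each row. So $L_1$ is likely RPDA-accepted, and we need something harder. Instead use $L_3=\{ w\in L_1$ whose rows are $(a^nb^n)$, stacked $m$ times, with $m=n\}$? A CFMG cannot synchronise the row count with $n$ either. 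So we choose a column-mirror language: each row is $uu^R$ for a varying word $u$, with the columns constant. CFMG with horizontal rules $S\to X S X$ for $X\in\{A,B\}$ and constant columns $A\to aA\mid a$ gives pictures whose rows are all the same palindrome $uu^R$, taken $m$ times.

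The RPDA must check that every row is a palindrome (a per-row stack check that empties) and that consecutive rows are equal. The equality check across rows of unbounded length cannot be combined with the stack: the stack is needed to check the palindrome of row $i$ and is empty afterwards, so no information of unbounded size passes from row to row. We formalise this by a crossing-sequence argument at the row boundary. There the configuration is a state together with a pushdown content. Fix two rows, and consider the pairs (end of row 1, start of row 2) over $2^{k}$ choices of $u$ of length $k$; these force $2^k$ distinct boundary configurations. The crux is showing that the pushdown content at the boundary cannot encode $u$ while the palindrome check of row 2 still succeeds. A standard cut-and-paste on the stack height profile during row 2 shows that the stack used to verify row 2 sits above the stored information, so that information must be popped during the comparison and cannot also certify equality to row 1. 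This is the step I expect to be hardest and would attempt by an interchange argument on two pictures sharing the boundary state and the stack-bottom behaviour.
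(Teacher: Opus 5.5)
Your first direction is sound. The CFMG you give generates $L_1$, so $T(L_1)\in T(\mathcal{L}(\mathrm{CFMG}))\subseteq\mathcal{L}(\mathrm{RPDA})$, and your direct RPDA also works. Moreover $T(L_1)\notin\mathcal{L}(\mathrm{CFMG})$: the single-column pictures of any CFMG language form a regular set, because a lone column is derived from one intermediate by right-linear vertical rules with nothing to synchronise, whereas the single-column pictures of $T(L_1)$ are the columns $a^nb^n$.

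The gap is in the other direction, and it is the whole content of that direction: you never prove $\{(uu^R)_m\}\notin\mathcal{L}(\mathrm{RPDA})$. The heuristic you offer is not valid. Nothing forces an RPDA to verify each row's palindrome on its own stack and reach the row boundary with an empty, information-free pushdown. It may, for example, push all of row~1 and compare row~2 against its reverse; for palindromic rows this \emph{is} the equality test. A proof must therefore exclude every strategy, including ones that interleave both checks. Counting $2^k$ distinct boundary configurations gives nothing by itself, since the pushdown is unbounded. The ``cut-and-paste on the stack height profile'' that you defer is precisely the missing argument. Keep in mind also that rectangularity comes for free. On two-row inputs an RPDA is as strong as a PDA plus a test that both rows have equal length, i.e., a PDA with a one-turn blind counter. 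So any pumping-style argument must preserve row lengths, which ordinary pumping does not.

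A simpler witness, the one the paper relies on in the proof of Theorem~\ref{thm:2HRFA-vs-RMG}, is $L_{col}=\{w\ominus w\mid w\in\{a,b\}^+\}$. It is already in $\mathcal{L}(\mathrm{RMG})\subseteq\mathcal{L}(\mathrm{CFMG})$, and its flattened language $\{w\#w\}$ is not accepted by a PDA with a one-turn blind counter. If you want this self-contained, apply the interchange lemma of Ogden, Ross and Winklmann to the context-free language $K$ of the PDA that simulates the RPDA on flattened strings. Take $Q=\{w\#w : |w|=N\}$ and $m=N$, so the interchanged factors have length between $N/2$ and $N$.
\begin{itemize}
\item Interchanged factors occupy the same positions in all strings, so $\#$ stays in place and both rows keep length $N$. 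Hence the results lie in $K\cap\{x\#y : |x|=|y|\}=\{w\#w\}$.
\item Two corresponding positions of the two rows are $N+1$ apart, so a window of at most $N$ positions never contains both. Each letter in the window must therefore match its partner outside, which comes from the other string, and all interchanged factors must coincide.
\item The common factor then fixes at least $\lfloor N/2\rfloor$ letters of $w$. This bounds the interchangeable family by roughly $2^{N/2}$, against the lemma's lower bound of $2^N$ divided by a quadratic in $N$.
\end{itemize}
The same argument, applied to rows $uu^R$, would also settle your palindrome language.
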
 

\section{Two-Head Returning Finite Automata}
In this section, we introduce our new model: the
Returning Finite Automata with 2 heads. 
\begin{definition}
   A 2-Head Returning Finite Automaton, or 2-HRFA for short, is a 7-tuple $M=(Q,\Sigma,\delta,s,\linebreak[3]F,\#,\square)$ where $Q$ is the finite set of states, $\Sigma$ is the input alphabet, $s\in Q$ is the start state, $F$ is the set of final states and $\delta: Q\times ((\Sigma\cup \{\varepsilon,\#\})\times (\Sigma\cup \{\varepsilon,\#\})\setminus\{(\varepsilon,\varepsilon)\})\rightarrow 2^Q$ is the transition function. The symbol $\#\notin \Sigma$ denotes the boundary of the picture and $\square$ denotes a position passed by the automaton. 
\end{definition}

\noindent\textbf{Working of $M$:}\\
The automaton processes an input picture $P\in \Sigma_m^n$ working on $P\in (\Sigma\cup \{\square\})^+_+$ surrounded by the boundary symbol $\#$ such that $P^\#\in\{\#_m\}\obar \Sigma_{m}^{n}\obar \{\#_m\}$. Each of the two heads of the automaton reads the input picture row by row, respectively, moving in opposite directions. That is, the first head begins to read the picture from the top moving from left to right and the second head reads the picture from the bottom moving from right to left on each row. A picture is accepted by a 2-HRFA when the entire picture is processed  (i.e., the two heads meet) and the 2-HRFA enters a final state. This is formally detailed in the following.
\medskip

\noindent\textbf{Configuration of  $M$:}\\
We represent a configuration by $(q,P_\square,\mu,\mu^\prime)$, where $q\in Q$ is the current state, $P_\square\in (\Sigma\cup \{\square\})^+_+$ is the remaining picture still to be worked on, $\mu$ and $\mu^\prime$ are integers such that the first head is in the $\mu$th row and the second head is in the $\mu^\prime$th row. Hence, if $P_\square$ has $m$ rows, then $1\leq \mu\leq \mu'\leq m$.   We denote the currently read symbols by the first head and the second head as $(a_{\mu,j},b_{\mu^\prime,j^\prime})$,  respectively, and their positions in $P_\square^\#$ as $(\mu,j)$ and $(\mu^\prime,j^\prime)$,  respectively, where $a_{\mu,j},b_{\mu^\prime,j^\prime}\in \Sigma\cup \{\#,\varepsilon\},0\leq j,j^\prime\leq n+1$ 
(this indexing of $P^\#_\square$ allows to have a more natural indexing of $P_\square$ itself) and $1\leq \mu\leq\mu^\prime\leq m$. 
Actually, instead of the letter being in the given position, it may be allowed (depending on the transition function)  reading~$\varepsilon$, meaning that the given head(s) stay(s) in the same position in this computation step and the symbol under the head is actually ignored and not yet read (by the given head). Hence, $(q,P_\square,\mu,\mu^\prime)$, with $q\in Q$, is \emph{valid} if $1\leq \mu\leq \mu'\leq m$ and the following conditions are satisfied.
\begin{itemize}
    \item The  first $\mu-1$ rows and the last $m-\mu^{\prime}$ rows contain only $\square$.
    \item If $\mu<\mu^{\prime\prime}<\mu^{\prime}$, then the $\mu^\text{th}$ row of $P_\square^\#$ is in $\{\#\}\obar \{\square\}^j\obar \Sigma^{n-j}\obar \{\#\}$, $\mu^{\prime\prime\,\text{th}}$ row is contained in $\{\#\}\obar \Sigma^n\obar \{\#\}$ and the $\mu^{\prime}{\,}^\text{th}$ row  of $P_\square^\#$ is in $\{\#\}\obar \Sigma^{n-j^\prime}\obar \{\square\}^{j^\prime}\obar \{\#\}$ for some $0\leq j,j^\prime\leq n$. Now, the current position (in $P_\square^\#$) of the first head is $(\mu,j+1)$ and that of the second head is $(\mu^\prime,n-j^\prime)$. 
\item If $\mu=\mu^{\prime}$, then the  $\mu^\text{th}$ row of $P_\square^\#$ is in $\{\#\}\obar \{\square\}^j\obar \Sigma^{n-(j+j')}\obar \{\square\}^{j'}\obar \{\#\}$ for some $0\leq j,j'\leq n$ and $j+j'\leq n$. 
\end{itemize}
When the automaton starts the computation, the first head is in the first row reading the first symbol of $P$ and the second head is placed in the  $m^\text{th}$ row reading the last symbol of $P$. Accordingly, when we try to process the picture~$P$, 
$(s,P,1,m)$ is the initial configuration. The set of final configurations is $\{(q,\square_m^n,\mu,\mu) \mid q\in F\land m,n\geq 1, 1\leq\mu\leq m\}$.
\medskip

\noindent
\textbf{Valid configurations and  configuration transitions of $M$:} 
\begin{itemize}
    \item  Two-head moves
    
    Let $(q,P_\square,\mu,\mu^{\prime})$ and $(p,P_\square^\prime,\mu,\mu^{\prime})$ be two valid configurations such that $P_\square$ and $P_\square^\prime$ are the same except for the symbols $a_{\mu,j},b_{\mu^
\prime,j^\prime}$ in positions $(\mu,j)$ and $(\mu^{\prime},j^\prime)$, respectively, and these symbols from $\Sigma$ in $P_\square$ are replaced by $\square$ in $P_\square^\prime$. Then, $(q,P_\square,\mu,\mu^{\prime})\vdash_M (p,P_\square^\prime,\mu,\mu^{\prime})$ if $p\in \delta(q,a_{\mu,j},b_{\mu^
\prime,j^\prime})$.

In these steps it is required that the two heads are not in the same position, i.e., they can read the letter in their position without affecting the other heads position.
\item One-head moves
\begin{itemize}
    \item Let $(q,P_\square,\mu,\mu^{\prime})$ and $(p,P_\square^\prime,\mu,\mu^{\prime})$ be two valid configurations such that $P_\square$ and $P_\square^\prime$ are the same except for a symbol $b_{\mu^
\prime,j^\prime}$ in  position $(\mu^{\prime},j^\prime)$; this symbol from $\Sigma$ in $P_\square$ is replaced by~$\square$ in $P_\square^\prime$. Then,
    $(q,P_\square,\mu,\mu^{\prime})\vdash_M (p,P_\square^\prime,\mu,\mu^{\prime})$ if $p\in \delta(q,\varepsilon,b_{\mu^
\prime,j^\prime})$.
\item 
      Let $(q,P_\square,\mu,\mu^{\prime})$ and $(p,P_\square^\prime,\mu,\mu^{\prime})$ be two valid configurations such that $P_\square$ and $P_\square^\prime$ are the same except for a symbol $a_{\mu,j}$ in position $(\mu,j)$; this symbol from $\Sigma$ in $P_\square$ is replaced by~$\square$ in $P_\square^\prime$. Then,
    $(q,P_\square,\mu,\mu^{\prime})\vdash_M (p,P_\square^\prime,\mu,\mu^{\prime})$ if $p\in \delta(q,a_{\mu,j},\varepsilon)$.
\end{itemize}
        
\item One-head end of row with two-head moves 
\begin{itemize}
    \item Let $(q,P_\square,\mu,\mu^{\prime})$ and $(p,P_\square^\prime,\mu+1,\mu^{\prime})$ be two valid configurations such that $P_\square$ and $P_\square^\prime$ are the same except for a symbol $b_{\mu^
\prime,j^\prime}$ in  position $(\mu^{\prime},j^\prime)$; this symbol from $\Sigma$ in $P_\square$ is replaced\linebreak[4] by~$\square$ in $P_\square^\prime$. Moreover, $a_{\mu,j}=\#$ is scanned by the first head. Then,
    $(q,P_\square,\mu,\mu^{\prime})\vdash_M (p,P_\square^\prime,\mu+1,\mu^{\prime})$ if $p\in \delta(q,\#,b_{\mu^
\prime,j^\prime})$.
\item Let $(q,P_\square,\mu,\mu^{\prime})$ and $(p,P_\square^\prime,\mu,\mu^{\prime}-1)$ be two valid configurations such that $P_\square$ and $P_\square^\prime$ are the same except for a symbol $a_{\mu,j}$ in position $(\mu,j)$; this symbol from $\Sigma$ in $P_\square$ is replaced by $\square$ in $P_\square^\prime$. Moreover, $b_{\mu',j'}=\#$ is scanned by the second head.  Then,
    $(q,P_\square,\mu,\mu^{\prime})\vdash_M (p,P_\square^\prime,\mu,\mu^{\prime}-1)$ if $p\in \delta(q,a_{\mu,j},\#)$.
    \end{itemize}
    In the following cases, when the heads read $\#$ symbol, then in fact only its row is changing in the configuration, the picture with letters and squares does not change.
\item End of row and one-head moves 
\begin{itemize}
     \item Let $(q,P_\square,\mu,\mu^{\prime})$ and $(p,P_\square,\mu+1,\mu^{\prime})$ be valid configurations  with $\mu<\mu'$ and $p\in \delta(q,\#,\varepsilon)$, then    $(q,P_\square,\mu,\mu^{\prime})\vdash_M (p,P_\square,\mu+1,\mu^{\prime})$ if $a_{\mu,j}=\#$ is scanned by the first head.
\item Let $(q,P_\square,\mu,\mu^{\prime})$ and $(p,P_\square,\mu,\mu^{\prime}-1)$ be valid configurations   with $\mu<\mu'$ and $p\in \delta(q,\varepsilon,\#)$, then    $(q,P_\square,\mu,\mu^{\prime})\vdash_M (p,P_\square,\mu,\mu^{\prime}-1)$ if $b_{\mu',j'}=\#$ is scanned by the second head.
\end{itemize}
  
\item Both heads at the end of row 
\begin{itemize}
    \item 
Let $(q,P_\square,\mu,\mu^{\prime})$ and $(p,P_\square,\mu+1,\mu^{\prime}-1)$ be valid configurations with $\mu<\mu'-1$ and $p\in \delta(q,\#,\#)$. Then
    $(q,P_\square,\mu,\mu^{\prime})\vdash_M (p,P_\square,\mu+1,\mu^{\prime}-1)$ if both heads scan $\#$.
\end{itemize}
\end{itemize}
\textbf{Language acceptance by $M$:}\\
Thus, the language accepted by $M$ is 
$$L(M)=\{P\in \Sigma_+^+\mid (s,P,1,m)\vdash^*_M(f,\square_m^n,\mu,\mu)\mid f\in F; n,m\geq 1,1\leq \mu\leq m\}\,.$$

The automaton can be represented graphically by a transition diagram. The arc between any two states has a label which is a pair of symbols $(a,b)$. This means that the symbol $a$ is read by the first head and the symbol $b$ is read by the second head where any of $a$ or $b$ could be $\varepsilon$.

The 2-HRFA $M$ is deterministic, or 2-HRDFA for short, if $|\delta(q,a,b)|\leq 1$ for each $q\in Q$ and $(a,b)\in (\Sigma\cup \{\#\})\times (\Sigma\cup \{\#\})$. Moreover, if there are one-head moves of the first head, e.g.,  $\delta(q,a,\varepsilon)=p$, then there is no letter $b\in\Sigma\cup\{\#\}$ with $\delta(q,a,b)=r$ or $\delta(q,\varepsilon,b)=r$ for any $r\in Q$. 
Further, in case there are one-head moves by the second head, e.g.,
$\delta(q,\varepsilon,a)=p$, then there is no letter $b\in\Sigma\cup\{\#\}$ with $\delta(q,b,a)=r$ or $\delta(q,b,\varepsilon)=r$ for any $r\in Q$.

\begin{figure}
    \centering
    \scalebox{.95}{\begin{minipage}{.35\textwidth}\mbox{ }\\[-5ex]
    
$\begin{array}{ccccccc}
		\#&\underset{\rightarrow}{a_{11}}& \underset{\rightarrow}{a_{12}}& \underset{\rightarrow}{a_{13} } &
        \underset{\rightarrow}{a_{14}}& \underset{\rightarrow}{a_{15}}&
        \underset{\hookleftarrow}{\underset{\rightarrow}{\#}}\\
        
 	\#&\underset{\hookrightarrow}{a_{21}}& \underset{\rightarrow}{a_{22}}& 
    \underset{\rightarrow}{a_{23}} &
        \underset{\rightarrow}{a_{24}}& \underset{\rightarrow}{a_{25}}&
        \underset{\hookleftarrow}{\underset{\rightarrow}{\#}}\\
        
	\#&\underset{\hookrightarrow}{a_{31}}& \underset{\rightarrow}{a_{32}}& \underset{\rightarrow}{a_{33}} &
        \underset{\leftarrow}{a_{34}}& \underset{\scalebox{1}[-1]{$\hookleftarrow$}}{a_{35}}&
        {\#} \\ 
        
     \overset{\scalebox{1}[-1]{$\hookrightarrow$}}{\underset{\leftarrow} {\#}}&\underset{\leftarrow}{a_{41}}& \underset{\leftarrow}{a_{42}}& \underset{\leftarrow}{a_{43}} &
        \underset{\leftarrow}{a_{44}}& \underset{\scalebox{1}[-1]{$\hookleftarrow$}}{a_{45}}&{\#} \\ 
        
     \overset{\scalebox{1}[-1]{$\hookrightarrow$}}{\underset{\leftarrow} {\#}}&\underset{\leftarrow}{a_{51}}& \underset{\leftarrow}{a_{52}}& \underset{\leftarrow}{a_{53}} &
        \underset{\leftarrow}{a_{54}}&
  
\underset{\leftarrow}{a_{55}}&{\#} \\                      
\end{array}	$	
 \end{minipage}}\quad\scalebox{.95}{\begin{minipage}{.35\textwidth}
$\begin{array}{ccccccc}
		\#&\square& \square& \square &
       \square& \square&
        \# \\ 
        
 	\#&\square& \square& 
    \square &\underset{\rightarrow}{a_{24}}& \underset{\rightarrow}{a_{25}}& \underset{\hookleftarrow}{\underset{\rightarrow}{\#}} \\ 
        
	\#&\underset{\hookrightarrow}{a_{31}}&\underset{\rightarrow}{a_{32}}& \underset{\rightarrow}{a_{33}} &\underset{\leftarrow}{a_{34}}&\underset{\scalebox{1}[-1]{$\hookleftarrow$}}{a_{35}}&\# \\ 
        
    \overset{\scalebox{1}[-1]{$\hookrightarrow$}}{\underset{\leftarrow} {\#}}&\underset{\leftarrow}{a_{41}}&\underset{\leftarrow}{a_{42}}&\underset{\leftarrow}{a_{43}} &
        \square& \square&\# \\ 
        
     \#&\square&\square&\square &\square& \square&\# \\ 
                      
\end{array}	$
 \end{minipage}}\\[1ex]    
    \caption{2-HRFA processing an input of size $5\times 5$.}
    \label{fig:2-HRFA}
\end{figure}

Any 2-HFA can be interpreted as a 2-HRFA. Upon doing so, this 2-HRFA will accept only single-row arrays because there is no transition on the border symbol~$\#$ provided. Notice that in particular no 2-row array can be accepted, because we require that in a final configuration, both heads are at the same row, so that they can meet. This is only possible if one of the heads has digested a border symbol.
  
    \begin{example}\label{ex:Lsubrev}
     The language $L_{rev}=\left\{w \ominus w^R\mid w\in\{a,b\}^+ \right\}$ is accepted by the deterministic 2-HRFA~$M_{\textrm{rev}}$ (given in \figurename~\ref{fig:Lrev}).
            Notice the similarity to \autoref{ex:Lsub{w=wtor}}.
\end{example}

\begin{figure}[t]
		\centering
		% States
				\begin{tikzpicture}[shorten >=1pt, node distance=7em, on grid, auto,thick,initial text=]

% Define states
					\node[state, initial] (s) {$s$};
			          \node[state, right=of s,accepting] (s1) 
                    {$s_1$};
                    
% Transitions
					\draw[->] (s) edge [loop above] node[align=center]{$(a,a)$\\$(b,b)$} ();
                    \draw[->] (s) edge node[align=center]{$(\#,\varepsilon)$} (s1);
                    
                \end{tikzpicture}
		\caption{2-HRFA $M_{rev}$ that recognizes the language  $L_{rev}$. \label{fig:Lrev}}
	\end{figure} 
 \begin{example}\label{ex:LsubH}
    The set $L_{\textrm{H}}$ of tokens $\textrm{H}$ of different sizes and same proportions (maintaining horizontal symmetry of the figure) formally defined as $$L_{\textrm{H}} =\left\{(X\obar(\bullet)^{n}\obar X)_{m}\ominus X^{n+2}\ominus(X\obar(\bullet)^{n}\obar X)_{m}\mid  n,m\geq 1 \right\}\,,$$ 
    $$i.e., \scalebox{1}{$L_{\textrm{H}} =\left\{\begin{array}{ccc}
	 X& \bullet& X \\ 
      X& X& X  \\
	 X& \bullet& X
\end{array},\quad\begin{array}{cccc}
	 X& \bullet&\bullet& X \\ 
      X& X& X &X \\
	 X& \bullet&\bullet& X
\end{array},\quad \begin{array}{ccccc}
	 X& \bullet&\bullet&\bullet& X \\ 
      X& X& X &X&X \\
	 X& \bullet&\bullet&\bullet& X
\end{array},\quad\begin{array}{ccccc}
	 X& \bullet&\bullet& \bullet&X \\
     X& \bullet&\bullet& \bullet& X \\
      X& X& X &X& X \\
      X& \bullet&\bullet& \bullet& X \\
	 X& \bullet&\bullet& \bullet& X
\end{array},...\right\}$}$$
    can be accepted by a 2-HRFA, $M_{\textrm{H}}$ (see \figurename~\ref{fig:LsubH}).
\end{example}

    \begin{figure}[t]
		\centering
		% States
				\begin{tikzpicture}[shorten >=1pt, node distance=7em, on grid, auto,thick,initial text=]

% Define states
					\node[state, initial] (s) {$s$};
					\node[state, right=of s] (q1) {$s_1$};
					\node[state, right=of s1] (s2) {$s_2$};
                    \node[state, right=of s2] (s3) {$s_3$};
                     \node[state, right=of s3,accepting] (s4) {$s_4$};
                  ;
% Transitions
					\draw[->] (s) edge node[align=center] {$(X,X)$} (s1);
                    \draw[->] (s1) edge node[align=center] {$(\bullet,\bullet)$} (s2);
                     \draw[->] (s2) edge [loop above]node[align=center] {$(\bullet,\bullet)$} ();                  
                    \draw[->] (s2) edge node[align=center] {$(X,X)$} (s3);
                    \draw[->] (s3) edge node[align=center] {$(\#,\#)$} (s4);
                     \draw[->] (s3) edge[bend left] node[align=center] {$(\#,\#)$} (s);
                    \draw[->] (s4) edge [loop above]node[align=center] {$(X,X)$\\$(X,\varepsilon)$} ();
                    \end{tikzpicture}
		\caption{ A 2-head returning finite automaton $M_{\textrm{H}}$ that accepts the language $L_{\textrm{H}}$. \label{fig:LsubH}}
	\end{figure}
  
\begin{example}\label{ex:Lsub010}
         The language $L_{(010)}$ of rectangles of odd side lengths with $1$ in the center and $0$ elsewhere. 
 is accepted by the 2-HRFA $M_{010}=(Q, \Sigma, \delta, s, F,\#,\square)$ (see \figurename~\ref{fig:Lsub010}) where $Q =  \{s,s_{1},s_{2},s_{3},s_{4},s_{5}\}$, $\Sigma = \{0,1\}$, $s\in Q$ is the start state,  $F=\{s_{5}\}$ is the final state.

  \begin{example}\label{ex:a3b1}
    Suppose the automaton $M_{a^{3k}b^k}$ has 4 states: $\{q,p,r,s\}$ with initial and only accepting state~$s$ described by the transition diagram in \figurename~\ref{fig:Lsuba3b1}. Some of the members of the language $L_{a^{3k}b^k}$ accepted by $M_{a^{3k}b^k}$ are: $aaab$, $aaaaaabb$, as well as 
$$\begin{array}{cc}
	a& a \\ 
     a& b \\
	 \end{array}, \ \ \begin{array}{cccc}
	  a & a & a & a\\
    a & a & b & b
\end{array}, \ \ \begin{array}{cccc}
	  a & a & a & a\\
    a & a & a & a\\
    a & b & b & b
\end{array}, \ \ \begin{array}{cccccccc}
	  a & a & a & a &a & a & a & a\\
    a & a & a & a &a & a & a & a\\
    a & a & b & b & b & b& b & b
\end{array}, \ \ \begin{array}{c}
	  a \\
    a \\
    a \\
    b
\end{array}, \ \ 
\begin{array}{cc}
	  a & a \\
    a & a\\
    a & a\\
    b & b
\end{array}.$$
    \end{example}

     \begin{figure}[t]
		\centering
		 
	\begin{tikzpicture}[shorten >=1pt, node distance=5.5em, on grid, auto,thick,initial text=]

% Define states
					\node[state, initial] (s) {$s$};
					\node[state, right=of s] (s1) {$s_1$};
                    \node[state, right=of s1] (s2) {$s_2$};
                    \node[state, right=of s2] (s3) {$s_3$};
                     \node[state, right=of s3] (s4) {$s_4$};
                      \node[state, right=of s4,accepting] (s5) {$s_5$};
% Transitions
					\draw[->] (s) edge node[align=center,below] {$(0,0)$} (s1);
                    \draw[->] (s1) edge node[align=center,below] {$(0,0)$} (s2);
                    \draw[->] (s2) edge[loop above] node[align=center] {$(0,0)$} ();
                    \draw[->] (s2) edge node[align=center] {$(\#,\#)$} (s3);
                    \draw[->] (s3) edge node[align=center] {$(0,0)$} (s4);
                    \draw[->] (s4) edge[loop above] node[align=center] {$(0,0)$} ();
                    \draw[->] (s4) edge [bend left]node[align=center] {$(\#,\#)$} (s3);
                    \draw[->] (s4) edge node[align=center] {$(1,\varepsilon)$} (s5);

                \end{tikzpicture}
		\caption{2-HRFA $M_{010}$ that recognizes the language  $L_{010}$. \label{fig:Lsub010}}
	\end{figure}

 \begin{figure}[t]
		\centering		 
	\begin{tikzpicture}[shorten >=1pt, node distance=7em, on grid, auto,thick,initial text=]

% Define states
					\node[state, initial,accepting] (s) {$s$};
					\node[state, right=of s] (p) {$p$};
                    \node[state, right=of p] (q) {$q$};
                    \node[state, right=of q] (r) {$r$};
% Transitions
					\draw[->] (s) edge node[align=center] {$(a,\varepsilon)$} (p);
                    \draw[->] (s) edge[loop above] node[align=center] {$(\#,\varepsilon)$} ();
                    \draw[->] (p) edge node[align=center] {$(a,\varepsilon)$} (q);
                    \draw[->] (p) edge[loop above] node[align=center] {$(\#,\varepsilon)$} ();
                    \draw[->] (q) edge node[align=center] {$(a,\varepsilon)$} (r);
                    \draw[->] (q) edge [loop above]node[align=center] {$(\#,\varepsilon)$} ();
                    \draw[->] (r) edge[bend left] node[align=center,above] {$(\varepsilon,b)$} (s);
                    \draw[->] (r)edge[loop above] node[align=center] {$(\varepsilon,\#)$} ();
                    
                \end{tikzpicture}
		\caption{2-HRFA $M_{a^{3k}b^k}$ that recognizes the language $L_{a^{3k}b^k}$\label{fig:Lsuba3b1}.}
	\end{figure}     

In the next example, we show that the new model is able to accept special languages that are usually hard for most of the models like RFA  and CFMG due to pumping lemmas (Iteration Theorems) in \cite{FerPST2018,KriSir74a} that we now quote for the convenience of the reader. 
\begin{theorem}
    Let $L$ be a context-free matrix language (CFML). Then there exist integers $r$ and $t$ such that, for every $P\in L$ with $\left|P \right|_c = n, n>r$, there is a decomposition $P=U\obar V\obar X\obar Y\obar Z$ where $\left|V\obar Y \right|_c \geq 1$ such that $U\obar V^k\obar X\obar Y^k\obar Z$ is in $L$ for all $k\geq 1$.
\end{theorem}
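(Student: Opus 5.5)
The plan is to reduce the statement to the classical pumping lemma for context-free string languages, applied to the horizontal (first-phase) derivation, and then to check that the column-wise second phase can be copied along with the pumped factors. Let $G=(V_h,V_v,\Sigma_I,\Sigma,S,R_h,R_v)$ be a CFMG with $L=L(G)$. The horizontal grammar $G_h=(V_h,\Sigma_I,S,R_h)$ is an ordinary context-free grammar. Hence $H(G)$ is a context-free string language, and it has a pumping constant $p$. We take $r:=p$. The constant $t$ plays no role in the column statement; it belongs to the companion row version. We may put $t:=r$ or take it from the vertical (right-linear) part.

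First step: fix $P\in L$ with $|P|_c=n>r$ and fix one derivation of $P$. This derivation yields a word $w=A_1A_2\cdots A_n\in H(G)$, with one intermediate per column. For each $j$ it also yields a vertical derivation $A_j\Rightarrow^* c_j$ in $R_v$, where $c_j$ is the $j$th column of $P$ read top to bottom. All these vertical derivations have exactly $m=|P|_r$ steps, and the terminating rule $W\to a$ is used in the last step of each. This is precisely what the parallel application in the second phase enforces.

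Second step: since $|w|=n>r$, the string pumping lemma gives $w=uvxyz$ with $|vy|\ge 1$ and $uv^kxy^kz\in H(G)$ for all $k\ge 1$ (indeed for all $k\ge 0$). We cut $P$ into column blocks $P=U\obar V\obar X\obar Y\obar Z$ aligned with this factorization. So $U$ consists of the columns $c_1,\dots,c_{|u|}$, $V$ consists of the next $|v|$ columns, and so on. Then $|V\obar Y|_c=|vy|\ge 1$.

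Third step: we show $U\obar V^k\obar X\obar Y^k\obar Z\in L$. Start from the horizontal word $uv^kxy^kz\in H(G)$. To every occurrence of an intermediate inside a copy of $v$ (resp.\ $y$), assign the same vertical derivation that the corresponding column of $V$ (resp.\ $Y$) used in the derivation of $P$. Columns coming from $u,x,z$ keep their original vertical derivations. Every column derivation still has length exactly $m$ and terminates in step $m$. Therefore the collection of column derivations forms a legal parallel second phase, and it produces exactly $U\obar V^k\obar X\obar Y^k\obar Z$.

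The main obstacle is this last step: arguing rigorously that the synchronisation requirement of the second phase, namely that terminal rules fire simultaneously in all columns, is preserved under duplicating and deleting columns. I would handle it by first reformulating the second phase as follows. A picture of size $m\times n$ is derivable from the top word $A_1\cdots A_n$ iff for each $j$ there is a derivation $A_j\Rightarrow^* c_j$ in the right-linear grammar $R_v$ of length exactly $m$. Once this reformulation is in place, the condition is visibly column-local and invariant under repeating columns, and the argument goes through.
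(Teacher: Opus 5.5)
Your proof is correct. The paper gives no proof of its own here; it quotes the result from \cite{KriSir74a}. Your argument is the standard one for that iteration theorem: pump the context-free word in $H(G)$, then reuse the original right-linear column derivations for the copied columns. This is legitimate because every column derivation emits one terminal per step and so has length exactly $|P|_r$, which keeps the synchronised terminal step intact.

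As quoted, the constant $t$ is unconstrained, so any choice works. In the usual two-constant formulation it would bound $|V\obar X\obar Y|_c$ rather than belong to a row version. Your argument gives that bound as well, via the clause $|vxy|\le p$ of the string pumping lemma.
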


\begin{example}\label{ex:square}
         The language $L_{\text{squares}}$ of unary squares  
 is accepted by the 2-HRFA $M_{\text{squares}}=(\{s,q ,f\}, \{a\},\linebreak[3] \delta, s, \{f\},\#,\square)$  where $q= \delta(s,a,\varepsilon)$, $q=\delta(q,\varepsilon,a)$, $s=\delta(q,\varepsilon,\#)$, $f=\delta(s,\#,\varepsilon)$.
 For each step of the first head, the second head reads a full row, in this way, while the first head counts the columns of the input, the second head counts the rows. The input is accepted if these numbers match, as they are counted in parallel. Notice, that $M_{\text{squares}}$ is, in fact,  deterministic.
\end{example}
    \end{example}

\noindent
With  \autoref{ex:Lsubrev}, more precisely its transpose, we immediately see:
    
       \begin{theorem}\label{thm:RFA-vs-2HRFA}      $\mathcal{L}\mathrm{(RFA)}=T(\mathcal{L}\mathrm{(RMG)})\subsetneq \mathcal{L}\mathrm{(2\mbox{-}HRFA)}$.
    \end{theorem}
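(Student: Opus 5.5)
The equality $\mathcal{L}(\mathrm{RFA})=T(\mathcal{L}(\mathrm{RMG}))$ is quoted from \cite{FerPST2018}, so two things remain: the inclusion $\mathcal{L}(\mathrm{RFA})\subseteq\mathcal{L}(\mathrm{2\mbox{-}HRFA})$, and a language in $\mathcal{L}(\mathrm{2\mbox{-}HRFA})$ that no RFA accepts.

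\textbf{Inclusion by simulation.} Given an RFA $M=(Q,\Sigma,\delta,s,F,\#,\square)$, I build a 2-HRFA $M'$ that uses only the first head and lets the second head stay idle until the very end. For every transition $p\in\delta(q,a)$ with $a\in\Sigma\cup\{\#\}$, I add $p\in\delta'(q,a,\varepsilon)$. The "end of row and one-head moves" rule moves the first head to the next row after it reads $\#$, which matches the returning behaviour of the RFA.

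The difficulty is the end of the computation. The first head may not pass beyond row $\mu'=m$, and the configuration is final only once every symbol is $\square$. The second head's starting cell is the last symbol of $P$, and this is exactly where the first head would arrive last. So I let the second head read that final symbol instead: whenever $p\in\delta(q,a)$ with $p\in F$ and $a\in\Sigma$, I add a transition $(q,\varepsilon,a)\mapsto f$ for a fresh final state $f$, and take $F'=\{f\}$.

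I then check that this transition yields the final configuration $(f,\square_m^n,m,m)$ only when the first head has consumed everything except that last cell. This holds because the second head's only unread cell is the one adjacent to where the first head stands.

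A second check is that the first head can never read the last cell itself. If it did, both heads would sit on the same cell, which is disallowed for two-head moves. The remaining one-head transitions of the second head would then be stuck, so such a run does not accept, and the simulation is faithful.

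I expect this endgame bookkeeping against the validity conditions to be the only delicate step. It is routine but needs care.

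\textbf{Strictness.} I take the transpose of \autoref{ex:Lsubrev}, namely $T(L_{rev})$: two-column pictures whose second column is the first column read bottom-to-top. There are two routes.

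First, one could show directly that an RFA cannot accept $T(L_{rev})$. An RFA reads row by row and carries only finitely many states between rows. By a standard fooling-set or pigeonhole argument on the top halves, there are more than $|Q|$ distinct prefixes that require distinct continuations, so some two of them share a state and the RFA would accept a wrong picture.

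Second, and more simply, one can work with $L_{rev}$ itself, which a 2-HRFA accepts. Its transpose would have to be generated by an RMG, and RMG column languages behave like regular languages: each column of $w\ominus w^R$ would have to come from a regular vertical derivation, while the horizontal word $w$ is tied to the reversed row. The hint "its transpose" in the text suggests the intended witness is $L_{rev}\notin\mathcal{L}(\mathrm{RFA})$. An RFA reading the two-row picture $w\ominus w^R$ must remember all of $w$ after the first row in order to check the second row against $w^R$. This is impossible with finitely many states, since there are $2^n$ choices of $w$ of length $n$.

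I will therefore use the fooling-set argument on $L_{rev}$ directly. This is the cleanest way to prove $L_{rev}\in\mathcal{L}(\mathrm{2\mbox{-}HRFA})\setminus\mathcal{L}(\mathrm{RFA})$.
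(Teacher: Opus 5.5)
Your proof is correct and uses the same witness as the paper, $L_{rev}$, but you justify both remaining facts differently.

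\textbf{What the paper does.} It only points to \autoref{ex:Lsubrev}, ``more precisely its transpose''. It transfers the question through $\mathcal{L}(\mathrm{RFA})=T(\mathcal{L}(\mathrm{RMG}))$ and relies on iteration theorems for the non-membership: the column iteration theorem of \cite{KriSir74a} and the RFA pumping lemma of \cite{FerPST2018}, as cited again in Theorems~\ref{thm:2HRFA-vs-RMG} and~\ref{thm:rfa-b2hrfa-2hrfa}. It does not argue the inclusion $\mathcal{L}(\mathrm{RFA})\subseteq\mathcal{L}(\mathrm{2\mbox{-}HRFA})$ at this point. The inclusion only follows later from the product construction in Theorem~\ref{thm:rfa-b2hrfa-2hrfa}. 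There the first head runs the RFA forward, the second runs it backward from the final state, and acceptance happens when the two state components coincide. That construction gives the stronger $\mathcal{L}(\mathrm{RFA})\subseteq\mathcal{L}(\mathrm{B2\mbox{-}HRFA})$.

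\textbf{What your route buys.} Your one-head simulation, with the second head consuming only the last cell, is a simpler way to get the inclusion into 2-HRFA. It cannot give the B2-HRFA inclusion, since there both heads must move in every step. Your crossing-state argument replaces the appeal to an iteration theorem by an elementary argument that handles nondeterminism directly. For $2^n>|Q|$ you fix an accepting run on each $w\ominus w^R$ with $w\in\{a,b\}^n$. You find $w\neq w'$ with the same state after the first row's $\#$, and splice the runs to accept $w\ominus (w')^R\notin L_{rev}$.

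\textbf{A small correction to your endgame check.} The first head can in fact read the last cell, because only two-head moves forbid coinciding heads. At that moment the heads sit on the same cell, not adjacent ones. Such a run is still harmless: it ends in a state of $Q$ rather than $f$, with $\mu=\mu'$, so no $\#$-move applies and the configuration is not final.
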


    \begin{theorem}
        \label{thm:2-hrfavsRPDA}
       $\mathcal{L}\mathrm{(2\mbox{-}HRFA)}\subsetneq \mathcal{L}\mathrm{(RPDA)}$.
    \end{theorem}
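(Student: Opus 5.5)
The inclusion $\mathcal{L}\mathrm{(2\mbox{-}HRFA)}\subseteq \mathcal{L}\mathrm{(RPDA)}$ will come from a one-turn pushdown simulation. Strictness will come from single-row pictures, where a 2-HRFA degenerates to a 2-HFA (so it accepts only linear languages by Theorem~\ref{thm:2HFA}), while an RPDA can accept a non-linear context-free language.

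\textbf{Inclusion.} Linearise a picture $P$ with $m$ rows into its reading string $w_P = r_1\#r_2\#\cdots\#r_m$, where $r_i$ is the $i$th row, with the boundary symbols placed where the heads read them. The RPDA reads exactly $w_P$ from left to right. The first head of a 2-HRFA $M$ reads a prefix of $w_P$ left to right. The second head reads the complementary suffix from right to left: bottom row first, each row right to left, then a $\#$ to move up. So an accepting run of $M$ is a 2-HFA-style run on $w_P$ with a meeting point somewhere inside. I will simulate it by the classical one-turn PDA construction behind Theorem~\ref{thm:2HFA}, with two phases.
\begin{itemize}
\item \emph{Push phase.} The RPDA tracks the state of $M$ and consumes the symbols read by the first head. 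Whenever $M$ makes a move in which the second head reads $b$, the RPDA guesses $b$ and pushes it, while updating the simulated state.
\item \emph{Turn and pop phase.} At a nondeterministically guessed meeting point, the RPDA stops simulating moves. It reads the rest of $w_P$ and pops, checking that each symbol read equals the popped guess. Since the second head reads the suffix in reverse order, the stack order matches.
\end{itemize}
The RPDA additionally records in its finite control whether the simulated first head has crossed a $\#$ that the second head still has to account for. This enforces the side conditions of $M$'s moves: $\mu<\mu'$ for $\#$-moves, $\mu<\mu'-1$ for $(\#,\#)$, and the final requirement that both heads end in the same row. Acceptance uses an $\varepsilon$-move that pops $\startofpushdown$ only when the simulated state, remembered in the control during the pop phase, lies in $F$ and the whole input has been consumed. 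A premature pop blocks the computation, because no transition is possible with an empty pushdown.

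\textbf{Strictness.} First I show that on one-row pictures every 2-HRFA behaves like a 2-HFA. If $m=1$, then $\mu=\mu'=1$ throughout. All transitions that read $\#$ either increase $\mu$, decrease $\mu'$, or require $\mu<\mu'$, so they lead to invalid configurations and can never fire. Hence the single-row part of $L(M)$ is accepted by the 2-HFA obtained by deleting the $\#$-transitions, and by Theorem~\ref{thm:2HFA} it is linear. Next, take a non-linear context-free language such as $D=\{a^nb^na^kb^k\mid n,k\geq 1\}$, viewed as single-row pictures. An RPDA accepts $D$ by the ordinary pushdown algorithm, ignoring or using the terminating $\#$ and emptying the pushdown at the end. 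If some 2-HRFA accepted $D$, then $D$ would be linear, which is a contradiction.

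\textbf{Main obstacle.} The delicate step is the bookkeeping in the inclusion: making sure the RPDA, which reads the input strictly in row order, correctly reproduces $M$'s row-change rules and the meeting condition. The meeting point can lie inside a row or at a $\#$, and $(\#,\#)$-moves must be handled separately. I would handle all this with a bounded set of flags in the finite control, since only finitely many boundary situations can arise around the meeting point.
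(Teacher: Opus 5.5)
Your proposal is correct and follows the paper's proof. You use the same single-row separating language $\{a^jb^ja^kb^k\mid j,k\geq 1\}$ for strictness, and you obtain the inclusion through the one-dimensional correspondence of Theorem~\ref{thm:2HFA}, spelling out both the one-turn simulation and the point the paper leaves implicit: on single-row inputs no $\#$-transition can fire.

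One simplification: the finite-control flags in your "main obstacle" paragraph are unnecessary. With $\#$ treated as a letter, a 2-HRFA run on $P$ is literally a 2-HFA run on $r_1\#\cdots\#r_m$. The validity conditions ($\mu<\mu'$, $\mu<\mu'-1$, and $\mu=\mu'$ at the end) only say that no symbol of this string is read twice and that all are read. Your guess-and-verify phase already guarantees this, and a finite control could not check it directly anyway, since it cannot know $\mu'$.
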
 
       \begin{proof}
        First let us consider the one-row picture language $L_1= \{a^j \obar b^j\obar a^k\obar b^k\mid j,k\geq 1\}$. Clearly, $L_1$ can be accepted by an RPDA by pushing $j$ $a$'s in to the storage and while reading the $j$ $b$'s, it pops the $j$ $a$'s from the storage and repeats the same for the $k$ $a$'s and $k$ $b$'s. But $L_1$ is not linear and cannot be accepted by a 2-HFA. Thus, $L_1\notin \mathcal{L}\mathrm{(2\mbox{-}HRFA)}$ which proves the strictness of inclusion. The inclusion itself can be seen as in the one-dimensional case, compare Theorem~\ref{thm:2HFA}.
    \end{proof}
    
 \begin{definition}
        Let $M=(Q, \Sigma, \delta, s, F,\#,\square)$ be a 2-HRFA and $L(M)\subseteq \Sigma_+^+$. Let $\Sigma_{\#}=\Sigma\cup\{\#\}$.     
        For an array language $L\subseteq \Sigma_+^+$, define $L_{\text{flat}}=\left\{w(1)\#w(2)\#\cdots \#w(m)\in \Sigma_\#^*\mid w(1)\ominus w(2)\ominus \cdots \ominus w(m)
%        \begin{bmatrix}w_1\\w_2\\\vdots\\ w_m\end{bmatrix}
    \in L\right\}$ as the \emph{flattened version} of~$L$ (consisting of \emph{standard strings} in the terminology of~\cite{GiaRes97}). 
    \end{definition}
    \begin{theorem}\label{thm:2HRFA-vs-RMG}
     $\mathcal{L}\mathrm{(2\mbox{-}HRFA)}$ and  $\mathcal{L}\mathrm{(RMG)}$ are incomparable.
    \end{theorem}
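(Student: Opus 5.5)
To prove incomparability we exhibit one language in each difference: a language in $\mathcal{L}\mathrm{(2\mbox{-}HRFA)}$ that is not in $\mathcal{L}\mathrm{(RMG)}$, and a language in $\mathcal{L}\mathrm{(RMG)}$ that is not in $\mathcal{L}\mathrm{(2\mbox{-}HRFA)}$.

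\textbf{First direction.} We take $L_{\text{squares}}$, which is accepted by $M_{\text{squares}}$ in Example~\ref{ex:square}. Since $\mathcal{L}\mathrm{(RMG)}\subseteq\mathcal{L}\mathrm{(CFMG)}$, it suffices to show that the square language is not a CFML, using the quoted pumping theorem. Take the square with $n>r$ columns. Its decomposition $U\obar V\obar X\obar Y\obar Z$ with $|V\obar Y|_c\geq 1$ pumps to pictures that have $n$ rows but more than $n$ columns. These pictures are not squares, so $L_{\text{squares}}\notin\mathcal{L}\mathrm{(RMG)}$.

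\textbf{Second direction.} We need an RMG language that no 2-HRFA accepts, and the natural tool is flattening. We first note the restriction of a 2-HRFA to single-row inputs. On a one-row picture, a $\#$-transition can only be taken when it ends the computation, because no other row is available. So the one-row pictures accepted by a 2-HRFA form a linear language, by the argument behind Theorem~\ref{thm:2HFA}. The candidate is therefore a one-row language that is not linear, such as $L_1=\{a^jb^ja^kb^k\}$ from the proof of Theorem~\ref{thm:2-hrfavsRPDA}. However, $L_1$ is not an RMG language either, since the horizontal strings of an RMG form a regular language.

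We therefore exploit the vertical dimension instead. In an RMG the columns are generated by right-linear rules in parallel, so the following language is in $\mathcal{L}\mathrm{(RMG)}$:
\[
L_2=\{\,w_1\ominus w_2\ominus\cdots : \text{each column reads } a^j b^j a^k b^k \text{ with one common } j,k \text{ for all columns}\,\}.
\]
More precisely, the top row is $A\cdots A$, and the vertical rules generate $a$'s, then $b$'s, then $a$'s, then $b$'s. Parallel application synchronises the columns, but it does \emph{not} enforce $j$ equal to the number of $b$'s. That is too weak, so we change the target. The transpose of an RMG language is in $\mathcal{L}\mathrm{(RFA)}\subseteq\mathcal{L}\mathrm{(2\mbox{-}HRFA)}$, so we should look for an RMG language whose RFA-unfriendly feature is its \emph{column} structure.

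The plan is to take $L_{\text{col}}=\{\,(a^n)_{m}\ominus (b^n)_m : n,m\geq1\,\}$ generalised to the shape $(a^n)_j\ominus(b^n)_j\ominus(a^n)_k\ominus(b^n)_k$. Equal counts in columns can be forced in an RMG by pairing intermediates, but only in the horizontal direction. So instead we let the \emph{horizontal} string be $a^jb^j$-like via column pairing: $\{(a_1\ominus c)^{\,j}\ldots\}$.

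The honest core of the plan is this. Choose $L$ in $\mathcal{L}\mathrm{(RMG)}$ such that a 2-HRFA, whose row-wise reading order is the flattened string read from both ends, would have to accept a non-linear flattened language. Then argue that $L_{\text{flat}}$ is accepted by a 2-HFA whenever $L\in\mathcal{L}\mathrm{(2\mbox{-}HRFA)}$, by treating $\#$ as an ordinary letter in the 2-HFA simulation. This gives a contradiction with the linearity of flattened languages. For a suitable RMG language, e.g.\ pictures of $n$ columns whose columns are $a^{m}b^{m}$ with $m$ fixed by the parallel derivation, the flattening is $(a^n\#)^{m}(b^n\#)^{m}$-like. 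Intersecting with a regular set and applying a homomorphism, we try to reach a known non-linear language such as $\{a^jb^ja^kb^k\}$. The main obstacle is designing the RMG language so that its flattening is non-linear. I would try two vertical blocks whose lengths are synchronised within each block by the horizontal grammar choosing which intermediates appear, yielding flattened strings of the form $(x\#)^j(y\#)^j(x\#)^k(y\#)^k$.
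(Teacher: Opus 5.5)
Your first direction is correct and differs from the paper only in the witness. The paper uses $L_{rev}$ (Example~\ref{ex:Lsubrev}) together with the column iteration theorem for RMG. You use $L_{\text{squares}}$ and the quoted CFMG pumping theorem, which gives the slightly stronger $L_{\text{squares}}\notin\mathcal{L}(\mathrm{CFMG})$. One caveat: $M_{\text{squares}}$ as printed in Example~\ref{ex:square} never ends in $f$ with everything read. Adding $f\in\delta(s,a,\varepsilon)$, so that the first head reads the last symbol of row~1 directly into $f$, repairs it, and your witness then stands.

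The second direction has a genuine gap. You never exhibit a language in $\mathcal{L}(\mathrm{RMG})\setminus\mathcal{L}\mathrm{(2\mbox{-}HRFA)}$, and the transfer principle you plan to use is false. It is not true that $L\in\mathcal{L}\mathrm{(2\mbox{-}HRFA)}$ implies $L_{\text{flat}}$ is accepted by a 2-HFA. The one-state 2-HRFA (its state initial and final) with moves $(a,\varepsilon)$ and $(\#,\varepsilon)$ accepts $\{a\}^+_+$. Yet $L_{\text{flat}}\cap a^+\#a^+\#a^+=\{a^n\#a^n\#a^n\mid n\geq 1\}$ is not even context-free; your own $L_{\text{squares}}$ is another counterexample. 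Treating $\#$ as a letter only gives a linear $L'$ with $L'\cap\mathrm{Rect}=L_{\text{flat}}$, where $\mathrm{Rect}$ is the non-regular set of strings whose $\#$-separated blocks all have equal length. So non-linearity of $L_{\text{flat}}$ yields no contradiction, and a valid argument must use the equal-row-length constraint. Moreover, your candidates are not RMG languages. The one-column pictures of an RMG language form a regular set, since a single intermediate is expanded by right-linear vertical rules. That rules out columns like $a^mb^m$ or $x^jy^jx^ky^k$; the horizontal grammar cannot synchronise vertical lengths beyond the common height.

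The missing idea is to put the difficulty into two rows read in opposite directions, as the paper does with $L_{col}=\{w\ominus w\mid w\in\{a,b\}^+\}$. This language is RMG via $S\to AS\mid BS\mid A\mid B$ and $A\to aA'$, $A'\to a$, $B\to bB'$, $B'\to b$. The paper excludes it via Theorem~\ref{thm:2-hrfavsRPDA} and the imported fact $L_{col}\notin\mathcal{L}(\mathrm{RPDA})$. That fact is proved with a one-turn blind counter precisely to handle the equal-length condition on $u\#v$.

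A self-contained cut-and-paste argument also works. Take $xy\ominus xy$ with $|x|=|y|=h$, and cut an accepting run the first time either the first head has read $h$ symbols of row~1 or the second head has read $h$ symbols of row~2. No border has been crossed yet. A cut type records which head triggered, the state, and how far the other head got, so there are at most $2|Q|(h+1)$ types. For large $h$, some type is shared by more than $2^h$ of the $2^{2h}$ pictures. Hence two of them differ in the half just read by the triggering head. Splice the first part of one run with the second part of the other. The result is an accepted picture whose rows still both have length $2h$, because the cut lies at the same positions, but whose rows disagree on that half. That contradicts the automaton accepting exactly $L_{col}$.
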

\begin{proof}
       The language $L_{col}=\left\{w\ominus w\mid w\in \{a,b\}^+\right\}\in \mathcal{L}(\mathrm{RMG})$ (by \cite{FerJen2025}, Example 3) but $L_{col}\notin \mathcal{L}\mathrm{(RPDA)}$ (\cite{FerJen2025},Theorem 3) because the associated flattened language $\{w\#w\mid w\in \{a,b\}^+\}$ cannot be accepted by a pushdown automaton equipped with a blind counter that makes one turn. Hence, by \autoref{thm:2-hrfavsRPDA} $L_{col}\notin \mathcal{L}\mathrm{(2\mbox{-}HRFA)}$.  
            Next, consider the language $L_{rev}$. From \autoref{ex:Lsubrev}, we see that $L_{rev}\in \mathcal{L}\mathrm{(2\mbox{-}HRFA)}$ but $L_{rev}\notin \mathcal{L}\mathrm{(RMG)}$ due to column iteration theorem given in \cite{KriSir74a}.
        \end{proof}
       \begin{theorem}\label{thm:2HRFA-vs-CFMG}
     $\mathcal{L}\mathrm{(2\mbox{-}HRFA)}$ and  $\mathcal{L}\mathrm{(CFMG)}$ are incomparable.
    \end{theorem}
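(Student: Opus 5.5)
To prove that $\mathcal{L}\mathrm{(2\mbox{-}HRFA)}$ and $\mathcal{L}\mathrm{(CFMG)}$ are incomparable, I will exhibit two witness languages, one for each direction, reusing the separations already obtained in the proof of \autoref{thm:2HRFA-vs-RMG}.

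\textbf{A CFMG language that no 2-HRFA accepts.} I take $L_{col}=\{w\ominus w\mid w\in\{a,b\}^+\}$. Since every RMG is a CFMG (right-linear horizontal rules are a special case of context-free ones), $L_{col}\in\mathcal{L}(\mathrm{RMG})\subseteq\mathcal{L}(\mathrm{CFMG})$. The proof of \autoref{thm:2HRFA-vs-RMG} already shows $L_{col}\notin\mathcal{L}(\mathrm{RPDA})$. By \autoref{thm:2-hrfavsRPDA} it follows that $L_{col}\notin\mathcal{L}\mathrm{(2\mbox{-}HRFA)}$. This direction needs nothing new.

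\textbf{A 2-HRFA language that no CFMG generates.} The natural candidate is $L_{\text{squares}}$ from \autoref{ex:square}, which is accepted by a 2-HRFA. I will apply the column iteration theorem for CFMLs quoted above.
\begin{enumerate}
\item Assume $L_{\text{squares}}$ is a CFML, with constants $r,t$.
\item Take the $n\times n$ square with $n>r$.
\item The theorem gives a decomposition $P=U\obar V\obar X\obar Y\obar Z$ with $|V\obar Y|_c\ge 1$ such that $U\obar V^k\obar X\obar Y^k\obar Z\in L$ for all $k\ge 1$.
\item For $k=2$ the resulting picture still has $n$ rows but strictly more than $n$ columns, so it is not a square. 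This is a contradiction.
\end{enumerate}
Alternatively, $L_{rev}$ could serve here, arguing as in \autoref{thm:2HRFA-vs-RMG} via a column pumping argument. However, squares are cleaner, because column pumping never changes the number of rows.

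\textbf{Main obstacle.} The only delicate point is making sure the quoted iteration theorem applies verbatim. Pumping must act on columns while keeping the row count fixed. This holds because in a CFMG the horizontal phase produces the top row and each column is then generated vertically in parallel, so all columns have equal height. I will state this explicitly, and I will also note the inclusion $\mathcal{L}(\mathrm{RMG})\subseteq\mathcal{L}(\mathrm{CFMG})$ from the definitions. Combining the two directions yields the incomparability claimed in the theorem.
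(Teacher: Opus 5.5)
\textbf{Verdict.} Your proof is correct, but the second direction takes a different route from the paper, and two points need fixing.

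\textbf{Comparison with the paper.} Your first direction is exactly the paper's argument. It uses $L_{col}$, the inclusion $\mathcal{L}(\mathrm{RMG})\subseteq\mathcal{L}(\mathrm{CFMG})$, the fact that $L_{col}\notin\mathcal{L}(\mathrm{RPDA})$, and $\mathcal{L}\mathrm{(2\mbox{-}HRFA)}\subsetneq\mathcal{L}(\mathrm{RPDA})$. For the other direction the paper uses $L_{010}$ and cites ``the iteration theorems'' of Krithivasan and Siromoney. You instead use $L_{\text{squares}}$ together with the quoted column iteration theorem alone. Your argument is sound and more self-contained. Column pumping by itself cannot exclude $L_{010}$: take $V$ and $Y$ to be equally wide blocks of all-zero columns on either side of the central column. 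Pumping then keeps the $1$ centred and the width odd, so no contradiction arises. The paper's witness therefore really needs the vertical direction. For instance, the width-one pictures of a CFML form a regular column language, whereas those of $L_{010}$ are $\{0^i10^i\}$ read downwards. Squares, by contrast, are excluded in one line, because column pumping never changes the height.

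\textbf{Correction 1: drop the $L_{rev}$ aside.} The claim that $L_{rev}$ could serve instead is false, because $L_{rev}\in\mathcal{L}(\mathrm{CFMG})$. In $w\ominus w^R$, column $n+1-j$ is column $j$ with its two entries swapped. So the strings of intermediates form a palindrome-like context-free language. It is generated by $S\to I_{xy}SI_{yx}\mid I_{xy}I_{yx}\mid I_{xx}$ for $x,y\in\{a,b\}$, with vertical rules $I_{xy}\to xJ_y$ and $J_y\to y$. Symmetric column pumping stays inside $L_{rev}$. This is exactly why the paper changes witness here: $L_{rev}$ separates 2-HRFA only from RMG, whose horizontal grammar is right-linear.

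\textbf{Correction 2: the automaton $M_{\text{squares}}$ as written does not accept squares.} Check it against the formal semantics before citing it.
Each of the $n$ letters the first head reads in row~1 is paired with a complete row read by the second head. Those rows must lie among rows $2,\dots,m$, so an $n\times n$ square is not accepted. Moreover, its final $(\#,\varepsilon)$-step is not permitted once $\mu=\mu'$.

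The membership $L_{\text{squares}}\in\mathcal{L}\mathrm{(2\mbox{-}HRFA)}$ is nevertheless true. Take a start state $s_0$, set $F=\{s\}$, and define $\delta(s_0,a,\varepsilon)=\delta(q,\varepsilon,\#)=\{s\}$ and $\delta(s,a,\varepsilon)=\delta(q,\varepsilon,a)=\{q\}$.
The first head never leaves row~1. The second head returns to $s$ only via $(\varepsilon,\#)$, which requires $\mu<\mu'$, so row~1 is read by the first head alone.
Hence acceptance in $(s,\square^n_m,1,1)$ after $k$ completed cycles forces $k=m-1$ and $1+k=n$, that is, $m=n$.

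With this automaton (or a similar one) given explicitly, and the $L_{rev}$ aside removed, your proof is complete.
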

    \begin{proof}
        By a similar argument as in \autoref{thm:2HRFA-vs-RMG}, we see that the language $L_{col}\in \mathcal{L}(\mathrm{CFMG})$ but cannot be recognized by a 2-HRFA. On the other hand, the language $L_{010}$ in \autoref{ex:Lsub010} can be accepted by a 2-HRFA but $L_{010} \notin \mathcal{L}\mathrm{(CFMG)}$ due to the iteration theorems given in  \cite{KriSir74a}. 
    \end{proof}

    \begin{theorem}\label{thm:2hrfa-T(cfmg)}
        $\mathcal{L}\mathrm{(2\mbox{-}HRFA)}$ and  $T(\mathcal{L}\mathrm{CFMG})$ are incomparable. 
           \end{theorem}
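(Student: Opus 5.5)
We need two witnesses: one language in $T(\mathcal{L}(\mathrm{CFMG}))$ that no 2-HRFA accepts, and one accepted by a 2-HRFA that is not in $T(\mathcal{L}(\mathrm{CFMG}))$.

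\emph{First direction.} Take the one-row language $L_1=\{a^j\obar b^j\obar a^k\obar b^k\mid j,k\geq 1\}$ from the proof of \autoref{thm:2-hrfavsRPDA}. That proof already shows $L_1\notin\mathcal{L}(\mathrm{2\mbox{-}HRFA})$: on single-row pictures a 2-HRFA acts like a 2-HFA, and $L_1$ is not linear, so Theorem~\ref{thm:2HFA} excludes it. It remains to show $L_1\in T(\mathcal{L}(\mathrm{CFMG}))$, i.e., that $T(L_1)$, a single column reading $a^jb^ja^kb^k$ from top to bottom, is generated by a CFMG. In a CFMG, column contents are produced by the right-linear vertical rules, so any single column language is regular. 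Hence $T(L_1)$ is \emph{not} CFMG-generated, and $L_1$ does not work.

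I therefore switch to a non-linear context-free string placed horizontally, i.e., use the transpose of $L_1$ directly. Let $L_1'$ be the single-row language $\{a^jb^ja^kb^k\}$ viewed as a row. Then $L_1'\in\mathcal{L}(\mathrm{CFMG})$: the horizontal context-free grammar $S\to AB$, $A\to I_aAI_b\mid I_aI_b$, $B\to I_aBI_b\mid I_aI_b$ produces the row over intermediates, and the vertical rules $I_a\to a$, $I_b\to b$ turn it into height-one pictures. Consequently $T(L_1')$, a single column, lies in $T(\mathcal{L}(\mathrm{CFMG}))$.

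The main obstacle is to show that no 2-HRFA accepts this single column $T(L_1')$. I would read a one-column picture as the string $a\#a\#\cdots$ and simulate any 2-HRFA on it by a 2-HFA working on that flattened string, with the heads reading $\#$ as ordinary symbols. Acceptance requires the two heads to meet in the same row, and this is mirrored by the heads meeting in the string. So the flattened language $\{a\#\}^{j}\{b\#\}^j\cdots$ would be linear. Erasing the $\#$'s by a homomorphism keeps it linear, which would make $\{a^jb^ja^kb^k\}$ linear, a contradiction. I would verify the details of this simulation carefully, since it is the delicate point.

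\emph{Second direction.} Use $L_{010}$ from \autoref{ex:Lsub010}, which is accepted by a 2-HRFA. The language is closed under transpose, so $T(L_{010})=L_{010}$. By \autoref{thm:2HRFA-vs-CFMG}, $L_{010}\notin\mathcal{L}(\mathrm{CFMG})$, hence $L_{010}\notin T(\mathcal{L}(\mathrm{CFMG}))$. Alternatively, $L_{\text{squares}}$ from \autoref{ex:square} is transpose-closed and violates the column iteration theorem, since pumping columns breaks squareness. This gives the second witness and completes the incomparability.
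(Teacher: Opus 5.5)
Your proposal is correct and follows the paper's proof. It uses the same two witnesses: $L_{010}$ (transpose-closed and not in $\mathcal{L}(\mathrm{CFMG})$), and the single column $T(L)$ with $L=\{a^j\obar b^j\obar a^k\obar b^k\mid j,k\geq 1\}$, where your 2-HFA simulation on the flattened string supplies the argument for $T(L)\notin\mathcal{L}(\mathrm{2\mbox{-}HRFA})$ that the paper only asserts. One detail to add: the simulating 2-HFA may accept strings that are not flattened columns, so intersect its language with the regular set $(\{a,b\}\#)^*\{a,b\}$ (linear languages are closed under this) before erasing the $\#$'s.
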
 
    
     \begin{proof}
        Consider the language $L_{(010)}=T(L_{(010)})$. By \autoref{ex:Lsub010}, $L_{(010)}\in \mathcal{L}(\mathrm{(2\mbox{-}HRFA)})$, but $L_{(010)}\notin \mathcal{L}(\mathrm{CFMG})$ due to the iteration theorems given in~\cite{KriSir74a}.
        Now, consider the single row picture language $ L= \{a^j\obar b^j\obar a^k\obar b^k\mid j,k\geq 1\}$ which is a context-free language and hence a context-free matrix language. %
        $T(L)=\left\{
        %\begin{bmatrix}a_j\\b_j\\a_k\\b_k\end{bmatrix}
        a_j\ominus b_j\ominus a_k\ominus b_k
        \mid j,k\geq 1\right\}\in T(\mathcal{L}\mathrm{(CFMG)})$ but $T(L)\notin \mathcal{L}\mathrm{(2\mbox{-}HRFA)}$.  Hence, the two classes of language are incomparable.
    \end{proof}

    \section{Both-Head Stepping 2-Head Returning Finite Automata}
    In this section, we define a variant of the 2-HRFA called both-head stepping 2-Head RFA or in short, B2-HRFA, similar to the definition of 2-HRFA by a 7-tuple $M=(Q,\Sigma,\delta,s,F,\#,\square)$.
The B2-HRFA processes an input picture surrounded by the boundary symbol $\#$ analogous to the 2-HRFA except that at each step of the computation, both the heads must move one step. In case the input picture is left with only one unread symbol, then only the first head moves and reads the input and accepts the picture if it enters the final state. 

 A transition with the ``symbol pair'' $(a,\varepsilon)_1$ denotes that this transition is allowed only when the heads may read the same position (filled with $a$) to finish the input. For instance, for a picture with an even number of rows, $(\#,\varepsilon)_1$ is required. Otherwise, this exceptional type of transition is necessary for an odd number of columns.

 The B2-HRFA is deterministic, or B2-HRDFA for short, if for each $q^\prime\in Q$ and $(a,b)\in (\Sigma\cup \{\#\})\times (\Sigma\cup \{\varepsilon,\#\})$, there is at most one $p^\prime\in Q$ such that $\delta(q^\prime,(a,b))=p^\prime$. (The $\varepsilon$-move of the second head is reserved to the last step.)
 The language $L_{rev}$ and $L_\textrm{L}$ can be accepted by a B2-HRFA. (See   \autoref{fig:Lrev}.)
In the one-dimensional case, B2-HRFA are known to describe the even-linear languages, see \cite{AmaPut64,FerSem00,NagyiConcept2010,SemGar94}.

\begin{theorem}\label{thm:rfa-b2hrfa-2hrfa}
        $\mathcal{L}\mathrm{(RFA)}\subsetneq \mathcal{L}\mathrm{(B2\mbox{-}HRFA)}\subsetneq \mathcal{L}\mathrm{(2\mbox{-}HRFA)}$
    \end{theorem}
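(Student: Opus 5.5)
We establish the two inclusions and then give a separating language for each.

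\textbf{Inclusion $\mathcal{L}\mathrm{(RFA)}\subseteq \mathcal{L}\mathrm{(B2\mbox{-}HRFA)}$.} Since every RFA language is regular in its flattened form, the natural route is the one-dimensional construction showing that regular languages are even-linear.

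Given an RFA $M$ with state set $Q$, the B2-HRFA guesses at the start a ``meeting state'' $q_m$. It then runs two simulations in parallel. The first head simulates $M$ forwards from $s$, reading $\Sigma\cup\{\#\}$ along the flattened string. The second head simulates $M$ backwards from some final state, on the reversed flattened string. Its states are pairs of RFA states, namely the current forward state and the current backward state, together with the guessed target.

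A small subtlety concerns the border symbol. In the RFA, the last symbol of the picture is not followed by $\#$, whereas the second head of the B2-HRFA starts on the last letter and reads the row-ending $\#$ only when changing rows. These match the flattened string $w(1)\#\cdots\#w(m)$ read from both ends, since each head consumes exactly the separators between rows. So the two heads jointly consume the flattened string in lockstep.

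The run accepts when the heads meet, that is, when the forward state equals the backward state. If one symbol remains, the special transition $(a,\varepsilon)_1$ (respectively $(\#,\varepsilon)_1$) finishes the run, consuming it forward into a matching state.

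The step needing care is verifying that the row-change semantics of the B2-HRFA (the $(\#,\#)$, $(\#,b)$ and $(a,\#)$ moves) exactly mirror the flattened string. I expect this to follow routinely from the configuration definitions.

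\textbf{Inclusion $\mathcal{L}\mathrm{(B2\mbox{-}HRFA)}\subseteq \mathcal{L}\mathrm{(2\mbox{-}HRFA)}$.} A B2-HRFA step $(a,b)$ is already a 2-HRFA step. The final exceptional step $(a,\varepsilon)_1$ can be replaced by an ordinary $(a,\varepsilon)$ move into a fresh final state with no outgoing transitions. This preserves the language, because acceptance in a 2-HRFA requires the picture to be fully consumed.

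\textbf{Separating $\mathcal{L}\mathrm{(RFA)}$ from $\mathcal{L}\mathrm{(B2\mbox{-}HRFA)}$.} Use $L_{rev}$. The automaton $M_{rev}$ of Example~\ref{ex:Lsubrev} moves both heads in each step; its final $(\#,\varepsilon)$ is exactly the allowed $(\#,\varepsilon)_1$ for two rows. By Theorem~\ref{thm:RFA-vs-2HRFA} and the column iteration argument, $L_{rev}\notin\mathcal{L}\mathrm{(RFA)}$.

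\textbf{Separating $\mathcal{L}\mathrm{(B2\mbox{-}HRFA)}$ from $\mathcal{L}\mathrm{(2\mbox{-}HRFA)}$.} Use single-row pictures. On one-row inputs a B2-HRFA never usefully reads $\#$, so its single-row language is an even-linear string language. Restricting a B2-HRFA to single-row inputs preserves this, because intersecting with the one-row pictures just means never taking $\#$-transitions.

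Now take $\{a^nb^n\mid n\ge1\}\cup\{a^{3n}b^n\mid n\ge1\}$, viewed as one-row pictures; the simpler $\{a^{n}b^{2n}\mid n\ge 1\}$ would also do. This language is linear, hence accepted by a 2-HFA by Theorem~\ref{thm:2HFA}, and therefore by a 2-HRFA. It is not even-linear: by the standard pumping for even-linear languages, pumps occur symmetrically at equal distances from both ends, which breaks the $3{:}1$ ratio when $\varepsilon$-moves are forbidden.

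\textbf{Main obstacle.} The hardest step is rigorously justifying this non-even-linearity. I would cite the known characterization in \cite{AmaPut64,SemGar94} that even-linear languages have a two-sided pumping lemma with equal-length outer factors. Applied to $a^{n}b^{2n}$ with the middle forced inside the $b$-block, this pumps only $b$'s, giving a contradiction.
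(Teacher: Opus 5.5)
Your proposal is correct and follows the paper's proof. It uses the same forward/backward pair-of-states simulation along the flattened string for $\mathcal{L}\mathrm{(RFA)}\subseteq\mathcal{L}\mathrm{(B2\mbox{-}HRFA)}$ (your guessed meeting state is superfluous, since accepting in states $(p,p)$ suffices), $L_{rev}$ as the first witness, and the observation that one-row slices of B2-HRFA languages are even-linear for the second separation. There you use a purely single-row witness instead of $L_{a^{3k}b^k}$, and you additionally justify the inclusion $\mathcal{L}\mathrm{(B2\mbox{-}HRFA)}\subseteq\mathcal{L}\mathrm{(2\mbox{-}HRFA)}$, which the paper leaves implicit.

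One small slip: flattened RFA languages are not regular in general, because the equal-row-length condition is not regular. Your construction only uses that an RFA is a finite automaton run along the flattened string, so nothing depends on that claim.
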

    \begin{proof}
    The language $L_{rev}\in  \mathcal{L}\mathrm{(B2\mbox{-}HRFA)}$ (see \figurename~\ref{fig:Lrev}) but cannot be accepted by a RFA due to the pumping lemma given in \cite{FerPST2018}. Also, by \autoref{ex:a3b1}, the language $L_{a^{3k}b^k}\in \mathcal{L}\mathrm{(2\mbox{-}HRFA)}$ but not in $\mathcal{L}\mathrm{(B2\mbox{-}HRFA)}$. The reason is as follows: Consider the one-dimensional case of the language $L_{a^{3k}b^k}$ which is $\{a^{3k}b^k\mid k\geq 1\}$. Clearly this set is not even-linear and hence cannot be accepted by a both head stepping 2-HFA \cite{NagyiConcept2010} even though it is accepted by a 2-HFA. 

    To prove the inclusion $\mathcal{L}\mathrm{(RFA)}\subseteq \mathcal{L}\mathrm{(B2\mbox{-}HRFA)}$, let us consider the following construction. Let $L\in \mathcal{L}\mathrm{(RFA)}$ and $M=(Q,\Sigma,\delta,s,F,\#,\square)$ be the RFA accepting $L$ where $\delta\subseteq Q\times (\Sigma\cup \{\#\})\times Q$. Due to nondeterminism, we can assume that $F=\{s_f\}$ is a  singleton set.
We construct a B2-HRFA $\hat{M}=(\hat{Q},\Sigma, \hat{\delta},\hat{s},\hat{F},\#,\square)$ to accept $L$, with $\hat{Q}=(Q\times Q)\cup\{f\}$, $\hat{s}=(s,s_f)$. %$\hat{s}=(s,F)$.
For $q,p\in Q$ and $a,b\in \Sigma\cup \{\#\}$,  %$P\subseteq Q$,  
let $\hat\delta((q,p),(a,b))=\{(q',p')\mid q'\in\delta(q,a),p\in\delta(p',b)\}\}$.
Moreover, in order to digest an odd number of columns, we add $\hat\delta((q,p),(a,\varepsilon)_1)=\{f\mid p\in \delta(q,a)\}$, and to digest an even number of rows, we add $\hat\delta((q,p),(\#,\varepsilon)_1)=\{f\mid p\in \delta(q,\#)\}$.
Let $\hat F=\{f\}\cup \{(p,p)\mid p\in Q\}$. An induction proof to show the correctness of the construction is now straightforward.
    \end{proof}
    \begin{illustration}
        We illustrate the construction of B2-HRFA in \autoref{thm:rfa-b2hrfa-2hrfa} with the RFA defined in \autoref{ex:LsubL} by a transition diagram with only useful states in \figurename~\ref{fig:LsubL}.
        \end{illustration}
       
    \begin{figure}[t]
		\centering
		  \resizebox{0.9\linewidth}{!}{%
\begin{tikzpicture}[shorten >=1pt, node distance=8em, on grid, auto,thick,initial text=]

% Define states
					\node[state, initial] (ss2) {$(s,s_2)$};
					\node[state, right=of ss2] (s3s2) {$(s_3,s_2)$};
                    \node[state, right=of s3s2] (s1s2) {$(s_1,s_2)$};
                    \node[state, right=of s1s2,accepting] (f) {$f$};
                    \node[state, below=of s1s2] (ss1) {$(s,s_1)$};
                     \node[state, left=of ss1] (s3s1) {$(s_3,s_1)$};
                      \node[state, left=of s3s1, accepting](s1s1) {$(s_1,s_1)$};
                       \node[state, below=of s3s1] (s1s3) {$(s_1,s_3)$};                      
                       %\node[state,  left=of s1s,accepting] (ss) {$(s,s)$};
                     \node[state,  right=of ss1,accepting] (s3s3) {\small $(s_3,s_3)$};
                      \node[state,  right=of s3s3] (s1s) {$(s_1,s)$};
                    
% Transitions
					\draw[->] (ss2) edge node[align=center,above] {$(X,X)$} (s3s2);
                    \draw[->] (s3s2) edge node[align=center,above] {$(\bullet,X)$} (s1s2);
                    \draw[->] (s1s2) edge[loop above] node[align=center] {$(\bullet,X)$} ();
                    \draw[->] (s1s2) edge node[align=center,above] {$(\#,\varepsilon)_1$} (f);
                     \draw[->] (s1s2) edge node[align=center,right] {$(\#,\#)$} (ss1);
                    \draw[->] (ss1) edge node[align=center,above] {$(X,\bullet)$} (s3s1);
                    \draw[->] (ss1) edge node[align=center,above] {$(X,\bullet)$} (s3s3);
                    \draw[->] (s3s1) edge node[align=center,above] {$(\bullet,\bullet)$\\$(\bullet,\varepsilon)_1$} (s1s1);
                    \draw[->] (s1s1) edge[loop left] node[align=center] {$(\bullet,\bullet)$\\$(\bullet,\varepsilon)_1$} ();
                    \draw[->] (s3s1) edge node[align=center] {$(\bullet,\bullet)$} (s1s3);
                    \draw[->] (s1s3) edge node[align=center] {$(\bullet,X)$} (s1s);
                    \draw[->] (s3s3) edge node[align=center,above] {$(\bullet,X)$} (s1s);
                    \draw[->] (s1s) edge [bend right]node[align=center,above] {$(\#,\#)$} (ss1);
                    \draw[->] (s1s) edge node[align=center,right] {$(\#,\varepsilon)_1$} (f);
                     \draw[->] (s1s1) edge 
                    node[align=center,left] {$(\bullet,\bullet)$} (s1s3);

                \end{tikzpicture}
                }
		\caption{B2-HRFA $M^\prime_{\textrm{L}}$ constructed by \autoref{thm:rfa-b2hrfa-2hrfa} recognizing the language  $L_{\textrm{L}}$ (from \autoref{ex:LsubL}) with only useful states; we sent the transitions with $(\bullet,\varepsilon)_1$ not to $f$ but to $(s_1,s_1)$ to simplify the drawing.} \label{fig:LsubL}
	\end{figure}

     \begin{theorem}
          $\mathcal{L}\mathrm{(B2\mbox{-}HRFA)}=\mathcal{L}\mathrm{(B2\mbox{-}HRDFA)}$ 
     \end{theorem}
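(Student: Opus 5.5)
The inclusion $\mathcal{L}\mathrm{(B2\mbox{-}HRDFA)}\subseteq\mathcal{L}\mathrm{(B2\mbox{-}HRFA)}$ is trivial. For the converse we use the classical subset construction, exactly as for even-linear languages in the one-dimensional case \cite{AmaPut64,NagyiConcept2010}. The key point is that in a B2-HRFA both heads advance in lock-step. Consequently the positions of the two heads after $t$ steps depend only on $t$ and on the dimensions of the picture, never on the nondeterministic choices. This is the defining difference from the general 2-HRFA, where $(a,\varepsilon)$ and $(\varepsilon,b)$ moves let different runs desynchronise the heads. Hence every run of a B2-HRFA $M=(Q,\Sigma,\delta,s,F,\#,\square)$ on a picture $P$ reads, at step $t$, the same pair of symbols $(a_t,b_t)\in(\Sigma\cup\{\#\})^2$. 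The only exception is possibly the last step, where the exceptional transition $(a,\varepsilon)_1$ is used. Whether that final one-head step occurs is likewise determined by the dimensions of $P$: it happens exactly when a single unread cell (or a single $\#$) remains between the heads.

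Given $M$, we define $M'=(2^Q,\Sigma,\delta',\{s\},F',\#,\square)$ with
\[
\delta'(S,(a,b))=\textstyle\bigcup_{q\in S}\delta(q,(a,b)) \quad\text{for } a,b\in\Sigma\cup\{\#\},
\]
and, for the exceptional final move,
\[
\delta'(S,(a,\varepsilon)_1)=\textstyle\bigcup_{q\in S}\delta(q,(a,\varepsilon)_1).
\]
We set $F'=\{S\subseteq Q\mid S\cap F\neq\emptyset\}$. By construction $|\delta'(S,(a,b))|=1$ for each pair $(a,b)$, so $M'$ satisfies the determinism condition of a B2-HRDFA. The $\varepsilon$-second-component transitions cause no conflict with two-symbol transitions, because which kind applies is dictated by the configuration and not by a choice; this is exactly what the parenthetical remark in the definition of B2-HRDFA allows.

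Correctness follows by induction on the number $t$ of steps. The claim is that for every picture $P$, $M'$ can reach the configuration $(S,P_\square,\mu,\mu')$ after $t$ steps if and only if $S$ equals the set of states $q$ such that $M$ reaches $(q,P_\square,\mu,\mu')$ after $t$ steps. Here $P_\square,\mu,\mu'$ are the unique data determined by $t$, as explained above. The step case uses the fact that the next scanned pair $(a_t,b_t)$ is common to all runs, and in particular whether a row change via $\#$ occurs is common to all runs. Acceptance then matches: both automata halt at the same step $t_{\mathrm{end}}$, determined by $|P|_r\cdot(|P|_c+2)$ counting the $\#$ cells, and $M$ accepts iff some run ends in $F$ iff $S\cap F\neq\emptyset$.

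The main obstacle is the first step: carefully arguing from the configuration rules that the lock-step property really holds for pictures. Specifically, one must show that the row switches of the two heads (reading $\#$), the case $\mu=\mu'$ where the heads share a row, and the final exceptional move are all forced by the geometry alone. To do this we will express the heads' positions in the flattened string $w(1)\#w(2)\#\cdots\#w(m)$ as $t$ and $N-t$, where $N$ is the length of that string. This reduces the claim to the string case of even-linear automata, where it is standard.
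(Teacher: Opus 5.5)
Your proposal is correct and follows essentially the paper's route, a subset construction, which the paper only sketches by reference to the RFA case and to both-head stepping two-head automata on strings. Your justification, that in a B2-HRFA the scanned pair at every step (including whether the final $(a,\varepsilon)_1$ step occurs) is fixed by the picture, is the right one, matches the paper's folding remark, and makes the sketch precise. One small slip: the number of steps is $\lceil(|P|_r(|P|_c+1)-1)/2\rceil$, since exactly one $\#$ per row boundary is read, not a quantity given by $|P|_r\cdot(|P|_c+2)$; only the dependence on the dimensions matters for your argument.
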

     \begin{proof}
         The proof is based on the subset construction similar to the models RFA for arrays \cite{FerPST2018} and both-head stepping two-head automata defined for strings \cite{NagyiConcept2010}.
     \end{proof}
      \begin{theorem}\label{thm:B2HRFA-vs-CFMG}
     $\mathcal{L}\mathrm{(B2\mbox{-}HRFA)}$ and  $\mathcal{L}\mathrm{(CFMG)}$ are incomparable.
    \end{theorem}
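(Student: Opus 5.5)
We must exhibit two witness languages: one accepted by a B2-HRFA but not generated by any CFMG, and one generated by a CFMG but not accepted by any B2-HRFA. The proof follows the pattern of \autoref{thm:2HRFA-vs-CFMG}, using the inclusion $\mathcal{L}\mathrm{(B2\mbox{-}HRFA)}\subseteq\mathcal{L}\mathrm{(2\mbox{-}HRFA)}$ from \autoref{thm:rfa-b2hrfa-2hrfa}.

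\emph{A CFMG language that no B2-HRFA accepts.} Take $L_{col}=\{w\ominus w\mid w\in\{a,b\}^+\}$. The proof of \autoref{thm:2HRFA-vs-CFMG} shows $L_{col}\in\mathcal{L}(\mathrm{CFMG})$; indeed it is already in $\mathcal{L}(\mathrm{RMG})$. It also shows $L_{col}\notin\mathcal{L}\mathrm{(2\mbox{-}HRFA)}$, via \autoref{thm:2-hrfavsRPDA} and the RPDA result of \cite{FerJen2025}. Since $\mathcal{L}\mathrm{(B2\mbox{-}HRFA)}\subseteq\mathcal{L}\mathrm{(2\mbox{-}HRFA)}$, $L_{col}$ is not accepted by any B2-HRFA. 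This direction needs no new work.

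\emph{A B2-HRFA language that no CFMG generates.} The natural candidate is $L_{rev}=\{w\ominus w^R\mid w\in\{a,b\}^+\}$. It is accepted by the B2-HRFA of \figurename~\ref{fig:Lrev}: each step reads $(a,a)$ or $(b,b)$ with both heads moving. Once the first row is exhausted, the first head sits on $\#$ and the second head has just finished the second row, so the exceptional move $(\#,\varepsilon)_1$ finishes the input; this is the even-number-of-rows case. To show $L_{rev}\notin\mathcal{L}(\mathrm{CFMG})$, I use the column iteration theorem quoted above. Take $w=a^nb^n$ with $n$ large, so the picture $P=w\ominus w^R$ has first row $a^nb^n$ and second row $a^nb^n$ reversed, i.e.\ $b^na^n$. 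Decompose $P=U\obar V\obar X\obar Y\obar Z$ with $|V\obar Y|_c\ge 1$ and pump to $k=2$. Every picture in $L_{rev}$ has its second row equal to the reversal of its first row. For the pumped picture, write $u_i,v_i,x_i,y_i,z_i$ for the $i$th rows of $U,V,X,Y,Z$. Then row~1 is $u_1v_1v_1x_1y_1y_1z_1$ and row~2 is $u_2v_2v_2x_2y_2y_2z_2$. The constraint $\mathrm{row}_2=\mathrm{row}_1^R$ must then be checked.

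The main obstacle is this last verification: column pumping duplicates the columns of $V$ and $Y$ in place, whereas reversal would need them to appear mirrored. I would argue with counting. Pumping inserts a block of columns whose top entries are a factor of $a^nb^n$ and whose bottom entries are the corresponding factor of $b^na^n$. Any column of $P$ among the first $n$ has top $a$ and bottom $b$; any later column has top $b$ and bottom $a$. So every column is of type $\binom{a}{b}$ or $\binom{b}{a}$, and a picture in $L_{rev}$ of this shape must have its type sequence of the form $\binom{a}{b}^{p}\binom{b}{a}^{q}$ with $p=q$. The $p=q$ condition holds because $\mathrm{row}_2=\mathrm{row}_1^R$ forces the number of $a$'s in row~1 to equal the number of $b$'s in row~1. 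Pumping $V\obar Y$ either changes $p$ and $q$ unequally, or, if both grow equally (one block in each half), it preserves $p=q$.

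That last case must be excluded by a finer check. Take $w=a^nb^n a^n b^n$ with $n$ large, or more robustly a word such as $a^{n}ba^{2n}$, so that equal growth in both halves breaks the mirror position of a marker. If this proves fiddly, I would fall back on a distinct witness with a simpler argument, such as a variant of $L_{010}$ with an even number of rows and a centred marker row. I would verify it is B2-HRFA-acceptable by an analogue of \figurename~\ref{fig:Lsub010} in which both heads step together, and not in $\mathcal{L}(\mathrm{CFMG})$ by the same iteration theorem of \cite{KriSir74a} cited for $L_{010}$.
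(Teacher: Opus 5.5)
Your first direction is fine and matches the paper: $L_{col}\in\mathcal{L}(\mathrm{CFMG})$, $L_{col}\notin\mathcal{L}(\mathrm{2\mbox{-}HRFA})$, and $\mathcal{L}(\mathrm{B2\mbox{-}HRFA})\subseteq\mathcal{L}(\mathrm{2\mbox{-}HRFA})$. The gap is in the other direction. Your main witness $L_{rev}$ \emph{is} a CFMG language, so the non-membership you try to prove is false. Take intermediates $A_{xy}$ for $x,y\in\{a,b\}$ with vertical rules $A_{xy}\to xB_y$ and $B_y\to y$, so every column is $\binom{x}{y}$ of height $2$. Add the horizontal rules $S\to A_{xy}SA_{yx}\mid A_{xy}A_{yx}\mid A_{xx}$. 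The columns of $w\ominus w^R$ are $c_i=\binom{w_i}{w_{n+1-i}}$. So $c_{n+1-i}$ is $c_i$ with its two entries swapped, and a middle column (for odd $n$) has equal entries. These are exactly the column strings the grammar produces. This is the ``equal growth in both halves'' case you could not exclude. The iteration theorem only promises \emph{some} decomposition, and for $L_{rev}$ one with mirror-image blocks $V$, $Y$ always exists. Hence no choice of $w$ (such as $a^nb^na^nb^n$ or $a^nba^{2n}$) can yield a contradiction. The earlier fact $L_{rev}\notin\mathcal{L}(\mathrm{RMG})$ depends on the horizontal grammar being right-linear and does not carry over to CFMG.

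Your fallback is not yet an argument: with an even number of rows there is no centred row, and neither membership is verified. No variant is needed anyway, because $M_{010}$ is already a B2-HRFA. Apart from the final $(1,\varepsilon)$, all its moves are $(0,0)$ or $(\#,\#)$, where both heads step. The move $(1,\varepsilon)$ is used exactly when the centre cell is the only unread symbol, so it is the exceptional move $(1,\varepsilon)_1$ for an odd number of columns. For non-membership, in a CFMG every column is derived by a right-linear grammar, and the columns are coupled only through their common height. Take a large picture of $L_{010}$. The derivation of the column through the $1$ must repeat a vertical variable while still producing its upper block of zeros. Every other column's derivation also repeats some vertical variable. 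Pump each such loop so that every column grows by the same amount, $|V_v|!$. The result is a generated picture whose $1$ is off-centre, so $L_{010}\notin\mathcal{L}(\mathrm{CFMG})$. Using $L_{010}$ together with $L_{col}$ is what the paper's ``similar argument'' amounts to.
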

    \begin{proof}
        By a similar argument as in \autoref{thm:2HRFA-vs-CFMG}, we see that the two classes are incomparable.
    \end{proof}

\begin{example}\label{exa:0..010..0}
It is not hard to see that the array language $L=\{0\}^+_+\ominus\{1\}^+\ominus\{0\}^+_+$ can be described by an RFA.
We claim that the language $L'=\{0\}^+_+\obar\{1\}_+\obar\{0\}^+_+$ cannot be described by any B2-HRFA. Suppose the contrary. Then, some B2-HRFA  $M'=(Q,\{0,1\},\delta,s,F,\#,\square)$ accepts $L'$ and hence accepted by $M'$. Let $n=|Q|$ and consider processing $A(i)=(0^i\obar 1\obar 0^{n-i})_{2n+2}$ for $i=0,\dots,n$. Notice that all $A(i)$ are in $L'$. By the pigeon-hole principle, there is a state $p(i)$ that is entered twice after reading $\#$ with both heads, not entering the last row to be read. Again by the same principle, among these $n+1$ many states $p(0),\dots,p(n)$, there must be a state $q\in Q$ listed twice, say, $q=p(j)=p(j')$, with $j<j'$. This means that $M'$ can process an array as follows: It first works as if processing $A(j)$, but after first seeing $p(j)$  after reading $\#$ with both heads, it switches towards working as if it was processing $A(j')$ after first seeing $p(j')$ after reading $\#$ with both heads. This way, we see that $M'$ must also accept an array that is clearly not in~$L'$.
\end{example}

      \section{Closure and Decidability Properties}

      By a standard argument, one sees:
    \begin{theorem}\label{thm:Closure-union}
        $\mathcal{L}\mathrm{(2\mbox{-}HRFA)}$ is closed under union.
    \end{theorem}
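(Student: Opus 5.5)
We prove the claim with the usual disjoint-union construction, using nondeterminism to choose which machine to simulate at the first step. Let $L_1=L(M_1)$ and $L_2=L(M_2)$ with $M_i=(Q_i,\Sigma,\delta_i,s_i,F_i,\#,\square)$ for $i=1,2$. Without loss of generality the input alphabets coincide (take $\Sigma=\Sigma_1\cup\Sigma_2$; missing transitions simply block) and $Q_1\cap Q_2=\emptyset$. We build $M=(Q_1\cup Q_2\cup\{s\},\Sigma,\delta,s,F,\#,\square)$ with a fresh start state $s$.

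\textbf{Construction of $\delta$.} On $Q_1$, $\delta$ agrees with $\delta_1$, and on $Q_2$ it agrees with $\delta_2$. For the new state $s$ and every admissible pair $(a,b)\neq(\varepsilon,\varepsilon)$ we set
$$\delta(s,a,b)=\delta_1(s_1,a,b)\cup\delta_2(s_2,a,b).$$
So the first move of $M$ copies an initial move of either $M_1$ or $M_2$. This avoids $\varepsilon$-moves, which the model forbids, and it keeps $s$ from ever being re-entered, since no transition leads back into $s$.

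\textbf{Final states.} Let $F=F_1\cup F_2$. If $s_1\in F_1$ or $s_2\in F_2$, add $s$ to $F$ as well. This case is vacuous in practice: a final configuration requires every position to be $\square$, and the initial configuration has at least one unread symbol, so acceptance always needs at least one step.

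\textbf{Correctness.} The main step is to show that the configuration sequences correspond. Configurations depend only on the picture, the squares and the head rows $\mu,\mu'$, not on the state. Hence any computation $(s_i,P,1,m)\vdash^*_{M_i}(f,\square^n_m,\mu,\mu)$ with at least one step becomes a computation of $M$ by replacing $s_i$ with $s$ in the first configuration. The side conditions for the move types (such as $\mu<\mu'$ or $\mu<\mu'-1$) are conditions on configurations only, so they carry over unchanged.

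Conversely, take an accepting computation of $M$. Its first step enters $Q_1$ or $Q_2$, and after that it never leaves that component. Replacing $s$ by the corresponding $s_i$ therefore gives an accepting computation of $M_i$.

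This yields $L(M)=L_1\cup L_2$. The only subtle point, the one we would check carefully, is the no-$(\varepsilon,\varepsilon)$ restriction. That restriction is exactly why we merge the initial transitions into $s$ rather than branching from $s$ by $\varepsilon$-moves.
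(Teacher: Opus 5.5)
Your proof is correct and is the ``standard argument'' the paper invokes without giving any details: a fresh start state whose outgoing transitions copy those of $s_1$ and $s_2$ (needed because $(\varepsilon,\varepsilon)$-moves are forbidden), plus the observation that the step relation depends on the state only through $\delta$. Your side remark also holds: whether $s$ is final is irrelevant, since every accepting computation on a nonempty picture takes at least one step and $s$ is never re-entered.
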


    \begin{theorem}\label{thm:Closure-intersection}
        $\mathcal{L}\mathrm{(2\mbox{-}HRFA)}$ is not closed under intersection.
    \end{theorem}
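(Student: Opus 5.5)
We reduce to the one-dimensional case, as in the proof of \autoref{thm:2-hrfavsRPDA}, by working with single-row pictures. Take
$$L_a=\{a^j\obar b^j\obar c^k\mid j,k\geq 1\},\qquad L_b=\{a^j\obar b^k\obar c^k\mid j,k\geq 1\}.$$
Both are single-row picture languages, so a single-row picture is just a word. Both are linear: $L_a$ is generated by $S\to Sc\mid Tc$, $T\to aTb\mid ab$. By \autoref{thm:2HFA}, each is accepted by a 2-HFA. As remarked after the definition of 2-HRFA, any 2-HFA can be read as a 2-HRFA accepting exactly the same single-row pictures, since no transitions on $\#$ are provided. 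Hence $L_a,L_b\in\mathcal{L}\mathrm{(2\mbox{-}HRFA)}$.

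Their intersection is $L_a\cap L_b=\{a^n\obar b^n\obar c^n\mid n\geq1\}$. Suppose it were accepted by some 2-HRFA $M$. Every accepted picture has a single row, so in any accepting computation neither head can ever consume a $\#$. Consuming $\#$ would move that head to another row, and in a one-row picture with $\mu=\mu'=1$ the $\#$-transitions on the first head require $\mu<\mu'$ (or $\mu<\mu'-1$). The only exception is the mixed move $\delta(q,\#,b)$, which would also force $\mu+1\le\mu'$ for validity. Therefore, restricted to one-row inputs, $M$ behaves exactly as a 2-HFA on the word $a^nb^nc^n$. This would make $\{a^nb^nc^n\}$ linear, hence context-free, which is a contradiction.

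The one step needing care is this \emph{restriction argument}. We must check against the configuration definitions that, on one-row inputs, the transitions involving $\#$ can never be part of a valid computation ending in a final configuration. For robustness we can argue more directly: take an accepting 2-HRFA computation on a one-row picture and delete all steps reading $\#$. The validity conditions ($1\le\mu\le\mu'\le m=1$) show there are none. The remaining steps are literally 2-HFA steps on the same word, since the heads meet when all symbols are $\square$. Conversely, every 2-HFA computation is such a 2-HRFA computation. This identifies the single-row part of $\mathcal{L}\mathrm{(2\mbox{-}HRFA)}$ with the linear languages, and non-closure under intersection follows.
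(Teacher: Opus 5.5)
Your proof is correct and takes the same route as the paper: the same witnesses $\{a^nb^nc^i\}$ and $\{a^ib^nc^n\}$, their linearity together with the 2-HFA characterization, and the fact that $\{a^nb^nc^n\}$ is not context-free. Your restriction argument supplies a justification the paper only asserts, namely that a non-linear one-row language cannot be accepted by a 2-HRFA: no $\#$-move can occur when $m=1$, since it would violate $1\le\mu\le\mu'\le 1$, so on one-row inputs a 2-HRFA is literally a 2-HFA.
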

    \begin{proof}
      From the string case it is clear that the languages $L_a=\{a^nb^nc^i\mid n\geq 1,i\geq 1\}$ and $L_b= \{a^ib^nc^n\mid n\geq 1,i\geq 1\}$ are linear context-free languages. From \autoref{thm:2HFA}, we see that $L_a$ and $L_b$ can be accepted by 2-head finite automata (and hence by 2-HRFA). But their intersection language $L= L_a\cap L_b=\{a^nb^nc^n\mid n\geq 1,i\geq 1\}$ is not a context-free language~\cite{HopUll79}. Thus by \autoref{thm:2HFA}, $L$ cannot be accepted by a 2-HFA. Also, $L$ cannot be accepted by a 2-HRFA. Hence $\mathcal{L}\mathrm{(2\mbox{-}HRFA)}$ is not closed under intersection. 
    \end{proof}
    \noindent
    By De Morgan's law, the previous two results immediately imply:
     \begin{theorem}\label{thm:Closure-complement}
        $\mathcal{L}\mathrm{(2\mbox{-}HRFA)}$ is not closed under complement.
    \end{theorem}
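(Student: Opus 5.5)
The plan is a proof by contradiction that uses De Morgan's law together with \autoref{thm:Closure-union} and \autoref{thm:Closure-intersection}. Suppose $\mathcal{L}\mathrm{(2\mbox{-}HRFA)}$ were closed under complement, taken relative to $\Sigma_+^+$ for the alphabet under consideration. Take $L_a=\{a^nb^nc^i\mid n,i\geq 1\}$ and $L_b=\{a^ib^nc^n\mid n,i\geq 1\}$ from the proof of \autoref{thm:Closure-intersection}, viewed as one-row picture languages over $\Sigma=\{a,b,c\}$; both are accepted by 2-HRFA. By the assumption, $\Sigma_+^+\setminus L_a$ and $\Sigma_+^+\setminus L_b$ are in the class. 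By \autoref{thm:Closure-union}, so is their union, and applying complement once more gives $L_a\cap L_b=\{a^nb^nc^n\mid n\geq 1\}$ (as one-row pictures). This contradicts the proof of \autoref{thm:Closure-intersection}, which shows that this language is not accepted by any 2-HRFA.

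The point that needs care is the last step. The non-acceptance argument in \autoref{thm:Closure-intersection} only speaks about one-row pictures, so I have to make sure a 2-HRFA for a language of one-row pictures really gives a 2-HFA for the corresponding string language. I would argue it directly. On a one-row input the configurations always have $\mu=\mu'=1$. The transitions with $\#$ then have to be checked: the moves that advance a head to the next row require $\mu<\mu'$, and the one-head-end-of-row moves would push the first head below the second, which is invalid. So no transition reading $\#$ can fire. Deleting all $\#$-transitions therefore leaves a 2-HFA that accepts exactly the string version of the language. That language would then be linear by \autoref{thm:2HFA}, which is impossible because it is not even context-free.

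A second subtlety concerns the universe for complement. If complements are taken in $\Sigma_+^+$, the intermediate languages contain multi-row pictures, and that does no harm: the final intersection consists only of one-row pictures, and the reduction to 2-HFA above applies to it. One could also worry that the paper's convention of non-empty families makes some complement undefined. Here it does not, since every complement that appears is nonempty: $\Sigma_+^+\setminus L_a$ contains, for example, the one-row picture $a$.
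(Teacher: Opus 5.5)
Your proof is correct and is the same argument as the paper's, which obtains the theorem in one line from \autoref{thm:Closure-union} and \autoref{thm:Closure-intersection} via De Morgan's law. Your extra check, that on one-row inputs no $\#$-transition of a 2-HRFA can fire and so such an automaton acts as a 2-HFA, makes explicit the step ``$L$ cannot be accepted by a 2-HRFA'', which the paper's proof of \autoref{thm:Closure-intersection} only asserts.
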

    
\begin{theorem}\label{thm:Closure-concat}
        $\mathcal{L}\mathrm{(2\mbox{-}HRFA)}$ is not closed under row and column concatenation.
    \end{theorem}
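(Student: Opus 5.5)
Both parts of Theorem~\ref{thm:Closure-concat} can be reduced to the string case. We exhibit languages in $\mathcal{L}\mathrm{(2\mbox{-}HRFA)}$ whose concatenation is forced, by its shape, to behave like a non-linear string language, and then argue as in the proofs of Theorem~\ref{thm:2-hrfavsRPDA} and Theorem~\ref{thm:2hrfa-T(cfmg)}.

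\emph{Column concatenation.} Take $L_2=\{a^j\obar b^j\mid j\geq 1\}$, a one-row picture language. It is accepted by the 2-HFA of \figurename~\ref{fig:Lsubatonbton}, modified to accept also $n=1$, and hence by a 2-HRFA. Then $L_2\obar L_2=\{a^j\obar b^j\obar a^k\obar b^k\mid j,k\geq 1\}=L_1$, which was shown in the proof of Theorem~\ref{thm:2-hrfavsRPDA} not to be in $\mathcal{L}\mathrm{(2\mbox{-}HRFA)}$.

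The step to make explicit is why a one-row language accepted by a 2-HRFA must be linear. On a single-row input, both heads start in row~$1=m$. Any transition reading $\#$ with the first head would move it to row $2>m$, and any transition reading $\#$ with the second head would move it to row $0$. Neither yields a valid configuration, so such transitions are never used. Deleting every transition that involves $\#$ therefore gives a 2-HFA with the same one-row language. By Theorem~\ref{thm:2HFA}, that language is linear, and $L_1$ is not linear.

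\emph{Row concatenation.} Use the transposed situation: $T(L_2)=\{a_j\ominus b_j\mid j\geq1\}$, a one-column language. A 2-HRFA accepts it by reading pairs $(a,b)$ and $(\#,\#)$ alternately. At the end, when the two heads meet on the middle boundary, it uses a final $(\#,\varepsilon)$-type step; this is the row analogue of \autoref{ex:Lsubrev} and $M_{\textrm{H}}$. The target $T(L_2)\ominus T(L_2)=T(L_1)$ is not in $\mathcal{L}\mathrm{(2\mbox{-}HRFA)}$, as asserted in Theorem~\ref{thm:2hrfa-T(cfmg)}.

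The main obstacle is to justify that exclusion properly. On one-column inputs, every symbol read is followed by a $\#$ in the same row. Compress each pair ``letter, then $\#$'' read by a head into a single move of a 2-HFA working on the string $a^jb^ja^kb^k$ of column entries. Extra finite-state bookkeeping records whether each head has already read the letter of its current row, which handles one-head moves and interleavings. The simulation has to show that the row-switching steps ($(\#,\varepsilon)$, $(\#,b)$, $(\#,\#)$) translate into legal 2-HFA steps and that the meeting condition matches. Once that is done, Theorem~\ref{thm:2HFA} again gives that $T(L_1)$ would be linear in its flattened column word, a contradiction.

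If this simulation proves awkward, a fallback for row concatenation is a pumping/crossing argument. Fix $j=k$ large, and consider the configuration in which the two heads have met. A linear amount of information (the difference between the two block lengths) would have to be transmitted through a bounded state, which gives the same contradiction via a standard cut-and-paste on the first head's position.
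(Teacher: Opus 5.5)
Your argument is correct, but the row-concatenation case uses a different witness from the paper.

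\textbf{Column concatenation.} Here you follow the paper exactly: $L=\{a^j\obar b^j\}$ and $L\obar L=L_1$. You also supply the justification the paper leaves implicit. On a one-row input no $\#$-transition can lead to a valid configuration, so the one-row part of a 2-HRFA language is a 2-HFA language and hence linear.

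\textbf{Row concatenation: the paper's route.} The paper reuses the same one-row $L$ and only says a ``similar argument'' handles $L\ominus L=\{a^jb^j\ominus a^jb^j\mid j\geq 1\}$ (equal widths force $j=k$). The implicit reason is short. On two-row inputs exactly one head crosses the single row boundary, so the machine is a 2-HFA on $w_1\#w_2$, and $\{a^jb^j\#a^jb^j\}$ is not even context-free. This needs no new automaton and only a trivial one-boundary reduction.

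\textbf{Row concatenation: your route.} Your one-column witness $T(L_2)$ needs its own automaton; the one you describe is correct. It also needs a simulation across arbitrarily many boundaries, including $(\#,\#)$ steps. In exchange, you obtain $T(L_1)\notin\mathcal{L}\mathrm{(2\mbox{-}HRFA)}$, which the paper only asserts in Theorem~\ref{thm:2hrfa-T(cfmg)}.

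That simulation is cleanest if you keep the boundaries as symbols rather than compressing each ``letter, then $\#$'' pair. On $c_1\ominus\cdots\ominus c_m$, a 2-HRFA behaves exactly like a 2-HFA on $c_1\$c_2\$\cdots\$c_m$, where a head reading $\#$ reads the corresponding $\$$. Three facts make this exact:
\begin{itemize}
\item Each of the $m-1$ boundaries is crossed by exactly one head.
\item The side conditions $\mu<\mu'$ and $\mu<\mu'-1$ say precisely that the relevant $\$$ is still unread, and that the two heads read distinct positions.
\item ``Everything read with $\mu=\mu'$'' is the same as ``the whole string read''.
\end{itemize}
Intersecting with $(\Sigma\$)^*\Sigma$ and erasing $\$$ by a homomorphism both preserve linearity, which gives the contradiction.

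Your fallback crossing argument is too vague to stand on its own; for $j=k$ the ``difference of block lengths'' is $0$. You do not need it.
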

    \begin{proof}
    The language $L_1= L\obar L$ where $L=\{a^jb^j\mid j\geq 1\}$ cannot be accepted by a 2-HRFA whereas the language $L$ is linear and can be accepted by a 2-HRFA. Similar argument follows for $L_2= L\ominus L$.
    \end{proof}
    \begin{theorem}\label{thm:b2hrfa-unionintersectioncomp}
        $\mathcal{L}\mathrm{(B2\mbox{-}HRFA)}$ is closed under union,  intersection and complement. 
    \end{theorem}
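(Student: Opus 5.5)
The plan is to use the theorem $\mathcal{L}\mathrm{(B2\mbox{-}HRFA)}=\mathcal{L}\mathrm{(B2\mbox{-}HRDFA)}$ proved above, together with a product construction. The key structural fact is that in a B2-HRFA both heads move in every step, except for the single final $(a,\varepsilon)_1$ step. Hence the sequence of head positions is completely determined by the size $m\times n$ of the input picture. In particular, two B2-HRFA running on the same picture are always synchronised: at each step they read the same pair of positions. This is exactly what fails for 2-HRFA, which is why \autoref{thm:Closure-intersection} holds there.

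\textbf{Intersection.} Given B2-HRFA $M_1=(Q_1,\Sigma,\delta_1,s_1,F_1,\#,\square)$ and $M_2=(Q_2,\Sigma,\delta_2,s_2,F_2,\#,\square)$, I build $M=(Q_1\times Q_2,\Sigma,\delta,(s_1,s_2),F_1\times F_2,\#,\square)$. Here
\[
\delta((q_1,q_2),(a,b))=\delta_1(q_1,(a,b))\times\delta_2(q_2,(a,b))
\]
for every pair $(a,b)\in(\Sigma\cup\{\#\})^2$, and the same rule is used for the exceptional symbols $(a,\varepsilon)_1$. By induction on the number of steps, $M$ reaches $(q_1,q_2)$ after $k$ steps on $P$ if and only if $M_i$ reaches $q_i$ after $k$ steps on $P$, for $i=1,2$. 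This uses that both machines are in the same head configuration after $k$ steps. Since acceptance happens exactly when the heads meet, which is again a size-determined moment, $L(M)=L(M_1)\cap L(M_2)$. Union is handled the same way with final set $(F_1\times Q_2)\cup(Q_1\times F_2)$. Alternatively, one can use the standard disjoint-union construction.

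\textbf{Complement.} By the determinisation theorem, I take a B2-HRDFA $M$ for $L$. I first complete $M$ by adding a non-final sink state $d$. Every missing transition on a pair $(a,b)$ or $(a,\varepsilon)_1$ is sent to $d$, and $d$ loops on all pairs. Then on every picture $P$ there is exactly one run, and this run reaches the meeting point. Swapping final and non-final states then yields a B2-HRDFA for $\Sigma^+_+\setminus L$. Closure under union also follows from the other two properties via De Morgan.

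\textbf{Main obstacle.} The delicate point is the end of the computation. One must check that the time and manner in which the heads meet depend only on $(m,n)$. This includes whether the final step is a $(\cdot,\varepsilon)_1$ move, and the parity cases for even $m$ via $(\#,\varepsilon)_1$ and for odd $n$. Completeness must also cover these exceptional transitions; otherwise the complement automaton could wrongly reject by getting stuck. I would settle this first by a lemma describing the unique position trajectory for an $m\times n$ input, and then the inductions above become routine.
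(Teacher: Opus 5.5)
Your proof is correct. For complement it is the paper's argument: the paper cites $\mathcal{L}\mathrm{(B2\mbox{-}HRFA)}=\mathcal{L}\mathrm{(B2\mbox{-}HRDFA)}$, calls union ``easy'', and gets intersection by De Morgan. You additionally make explicit the completion by a sink state. You also make explicit the fact that both the subset construction and the final-state swap rely on: the head trajectory depends only on the size $m\times n$, including whether and when an $(a,\varepsilon)_1$ or $(\#,\varepsilon)_1$ step occurs.

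The real difference is intersection, where you build the synchronous product directly instead of going through complement and De Morgan. This gives you intersection closure without the determinization theorem, whose proof in the paper is only sketched. It also isolates exactly why the same construction fails for 2-HRFA, where the heads are not synchronised. The paper's route is shorter but makes intersection depend on determinization.

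One small slip: your union variant, ``the same product with final set $(F_1\times Q_2)\cup(Q_1\times F_2)$'', only works after you complete both $M_1$ and $M_2$ with sink states. Otherwise a run of $M_2$ that gets stuck kills an accepting run of $M_1$. The disjoint-union construction and the De Morgan argument, both of which you also mention, avoid this. For the disjoint union, use a fresh start state that copies the outgoing transitions of $s_1$ and $s_2$, since the model has no $(\varepsilon,\varepsilon)$ moves.
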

    \begin{proof}
    Closure under union is again easy. Since deterministic and non-deterministic models of B2-HRFA define the same class of languages, we see that $\mathcal{L}\mathrm{(B2\mbox{-}HRFA)}$ is closed under complement. By De Morgan's law, closure under intersection follows.        
    \end{proof}
     
     \begin{theorem}\label{thm:b2hrfa-VR}
        $\mathcal{L}\mathrm{(B2\mbox{-}HRFA)}$ is closed under vertical reflection. 
    \end{theorem}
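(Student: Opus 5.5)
We prove the closure under vertical reflection by turning an automaton for $L$ into one for $L_{VR}$, in the spirit of the construction in \autoref{thm:rfa-b2hrfa-2hrfa}. Let $M=(Q,\Sigma,\delta,s,F,\#,\square)$ be a B2-HRFA with $L(M)=L$.

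\textbf{Key observation.} In a B2-HRFA both heads move in lockstep. After $t$ steps, the first head has read the first $t$ symbols of the flattened scan $w(1)\#w(2)\#\cdots$, and the second head has read the first $t$ symbols of the reversed scan. Under vertical reflection, each row $w(i)$ becomes $w(i)^R$, while the order of the rows is unchanged. So on $P_{VR}$ the first head reads, along row $i$, exactly the symbols that on $P$ are read right to left. These are the symbols that the \emph{second} head of $M$ would read on row $i$, but in the opposite row order.

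\textbf{Construction.} Because rows are reversed individually while the row order is kept, simply swapping the roles of the heads does not work. Instead we use a guess-and-verify simulation that runs $M$ backwards. The new automaton $M'$ has states $Q\times Q$ and start states $(p,p)$, where $p$ is a guessed \emph{meeting state} of $M$. The idea is to run a computation of $M$ on $P$ from its middle outward.

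Consider time step $t$ of $M'$ on $P_{VR}$.
\begin{itemize}
\item The first head of $M'$ is at the position of $P$ that is the mirror, within the same row, of the first head's position in $M$'s scan.
\item Interpreting $P_{VR}$ as $P$, the first head of $M'$ reading row $1$ left to right corresponds to traversing row $1$ of $P$ right to left.
\end{itemize}

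This does not match $M$'s first-head order for the whole picture. It does match $M$'s first-head order when $M$'s run is taken in reverse, but only within the rows of the upper half.

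\textbf{Main obstacle.} The difficulty is handling the row order. I would resolve it by exploiting the even-linear structure, i.e.\ that the two heads of $M$ always occupy positions at equal distance from the ends of the flattened string. Pair the first-head run on $P$ with the backwards run.

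Concretely, a transition $(q,p)\xrightarrow{(a,b)}(q',p')$ of $M'$ is allowed whenever, reading the reversed pair, $M$ has
\[
q\in\delta(q',b'),\qquad p'\in\delta(p,a').
\]
This is the same kind of reversal pairing already used in \autoref{thm:rfa-b2hrfa-2hrfa}, where one component runs forward and one runs backward. The accepting states of $M'$ are pairs $(s,f)$ with $f\in F$, and the special $(\cdot,\varepsilon)_1$ transitions are handled exactly as in that proof.

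If the row-order mismatch cannot be reconciled directly, the fallback is to show that $M$ can be normalized so that its behaviour within each row is captured by a finite transformation (a relation on $Q\times Q$) per pair of rows. Reversing a row's word then corresponds to reversing that relation, which a B2-HRFA can compute in its finite control. The correctness of this step is the part I would verify most carefully, by induction on the number of processed row pairs.
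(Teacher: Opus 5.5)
Your concrete construction does not work. The idea that would make it work appears only as an unverified fallback.

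The rule ``$q\in\delta(q',b')$, $p'\in\delta(p,a')$'' comes from the construction of Theorem~\ref{thm:rfa-b2hrfa-2hrfa}. There, the two components are a forward and a backward run of a \emph{one-head} automaton that meet in the middle of the flattened string. Here $M$ is already two-headed: each step of $M$ consumes a pair $(a,b)$. An accepting computation of $M$ is therefore a single run, not two runs that could be split into a forward and a backward component, and $\delta(q',b')$ is not even defined for $M$. Moreover, starting in a guessed ``meeting state'' $(p,p)$ and accepting in $(s,f)$ amounts to simulating $M$ from the centre of the picture outward. But $M'$ reads $P_{VR}$ from the outside in and visits the row pairs in exactly the same order as $M$. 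As you observe yourself, no global reversal of $M$'s run matches this order, and appealing to the even-linear structure does not change that.

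The missing ingredient is the block structure you hint at in the fallback, and it needs no normalization. Both heads step in lockstep and all rows have length $n$. So for $\mu<m+1-\mu$, $M$ processes rows $\mu$ and $m+1-\mu$ as one block: it reads the pairs $(w(\mu)_k,\,w(m+1-\mu)_{n+1-k})$ for $k=1,\dots,n$ and then $(\#,\#)$. On $P_{VR}$, $M'$ reads in the corresponding block the \emph{same} $n$ pairs in \emph{reverse} order, while the blocks keep their order. The simulation is then as follows:
\begin{itemize}
\item At the start of each block, $M'$ guesses the state $e$ in which $M$ will finish that block.
\item It simulates $M$ backwards via $p\mapsto\{q\mid p\in\delta(q,a,b)\}$.
\item On reading $(\#,\#)$, it checks that it has reached the state in which $M$ began the block.
\item It then takes some state of $\delta(e,\#,\#)$ as the start state of the next block.
\end{itemize}
Its states must therefore store (start state of block, current state, guessed end state). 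This is the idea behind the paper's construction with states $(r,p,t)$. Note that the block relation is not inverted; it is the same relation, only assembled in the opposite order.

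You also leave out the centre of the picture, and your reference to the $(\cdot,\varepsilon)_1$ transitions of Theorem~\ref{thm:rfa-b2hrfa-2hrfa} does not cover it. For odd $m$, both heads work on the middle row $w$: at step $k$, $M$ reads $(w_k,w_{n+1-k})$ and $M'$ reads $(w_{n+1-k},w_k)$. That is the same order with the heads' roles swapped, not a reversal. So $M'$ must switch to a \emph{forward} simulation using $\delta(q,b,a)$, keeping the $(a,\varepsilon)_1$ moves. The switch is made by a guess after the last $(\#,\#)$, or from the very start if $m=1$; the paper does exactly this. For even $m$, the final $(\#,\varepsilon)_1$ move has to be checked from the remembered guessed end state of the last block.
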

    \begin{proof}        
     Let $L\in \mathcal{L}\mathrm{(B2\mbox{-}HRFA)}$ such that $L=L(M)$ for a B2-HRFA $M$. We construct another B2-HRFA, $M^\prime$, to accept the vertical reflection language $L_{VR}$ of $L$. Suppose $M$ accepts a picture starting in the state $s$ reading the rows $\mu,\mu^\prime$ by the first and the second head, respectively, and reaches a state say $q$ on reaching the end of the rows. Now, $M^\prime$ stores the state $s$ in its finite control and guesses the state $q$. Then, $M^\prime$ starts to scan the rows $\mu,\mu^\prime$ from state $q$ by applying the transitions of $M$ in the reverse direction and when $M^\prime$ reaches the end of the rows, the automaton proceeds to compute only if $M^\prime$ is in state $s$ and this procedure is repeated on all the remaining rows. Formally, the states of $M^\prime$ are like $(r,p,t)$ where $p$ is the actual state, $t$ is the state to be reached by $M^\prime$ and $r$ is the state at which $M^\prime$ starts simulating $M$ in each pair of rows processed by the two heads simultaneously.
     
          There is an exception of this processing if an input array with an odd number of rows is processed. The middle row is where the two heads meet, and then we switch back to the state set~$Q$ of~$M$ and simply reverse the role of the two heads.
   Thus, let $M=(Q,\Sigma,\delta,s,F,\#,\square)$ be some B2-HRFA. We construct $M^\prime=(Q^\prime,\Sigma,\delta^\prime,s^\prime,F^\prime,\#,\square)$ where $Q^\prime=(Q\times Q\times Q)\cup Q\cup \{s'\}$, $\delta^\prime(s',a,b)=\{(r,p,s)\mid r\in Q, p\in \delta(r,a,b) \}\cup \{p\in Q\mid p\in\delta(s,b,a)\}$ for $a,b\in \Sigma$, $\delta^\prime((r,p,t),a,b)=\{ (r,q,t)\mid p\in \delta(q,a,b)\}$ for $r,p,t\in Q$ and
    $a,b\in \Sigma$, $\delta^\prime(q,a,b)=\{p\in Q\mid p\in\delta(q,b,a)\}$ for $q\in Q$, $a,b\in \Sigma$, $\delta^\prime (q,(a,\varepsilon)_1)=\delta(q,(a,\varepsilon)_1)$ for $q\in Q$ and $a\in\Sigma$, 
    $\delta^\prime((r,t,t),\#,\#)=\{(p,p,r)\mid p\in Q\}$ for  $r,t\in Q$. With this setting, we make sure that after reading the current upper and lower rows completely, the guessed state is verified. Therefore, $F_{\text{even}}'=\{(r,t,t)\mid t\in Q, r\in F\}$ is part of the final states of $M'$ (for an even number of rows). For an odd number of rows, we add $F_{\text{odd}}'=F$, so that $F'=F_{\text{even}}'\cup F_{\text{odd}}'$. 
    
    Let us add a few words on the correctness proof of this construction (that can be given by induction based on these arguments and the description of the work of $M'$ given above). The role of the start symbol~$s'$ can be best understood by noticing that it is assumed to cover two cases: (a) If the input array has a single row, then it will enter a state~$p$ based on swapping the roles of the two heads (compared to~$M$). This is the same situation as entered when $M'$ guesses it enters the last row to be processed from both ends after digesting the border symbol(s). (b) If the input array has more than one row, it will first guess the state that $M$ would enter after processing the first and last rows and then also perform reading the first pair of symbols. Again, this is very similar to what happens after digesting the border symbol(s) and then start processing the next pair of rows later on. Having understood this role of $s'$, one sees that this was introduced only due to the technical fact that the model that we have introduced does not allow for sets of start symbols. 
    \end{proof}        

\begin{example}
    The language $L_{\textrm{E}}$ of tokens $\textrm{E}$ with odd number of rows and even number of columns defined as $$L_{\textrm{E}} =\left\{(X)_{m}\obar(X\ominus (\bullet)_{\frac{m-3}{2}}\ominus X\ominus (\bullet)_{\frac{m-3}{2}}\ominus X)^{n}\mid n=2t-1, m=2k+3,k,t\geq 1 \right\}\,.$$ is accepted by a B2-HRFA $M_\textrm{E}=(\{s,s_1,s_2,s_3,s_4\},\{X,\bullet\},\delta,\{s\},F=\{s_4\})$, with transitions defined by $\delta(s,X,X)=\{s\}$, $\delta(s,\#,\#)=\{s_1\}$, $\delta(s_1,X,\bullet)=\{s_2\}$, $\delta(s_2,\bullet,\bullet)=\{s_2\}$, $\delta(s_2,\bullet,X)=\{s_3\}$, $\delta(s_3,\#,\#)=\{s_1,s_4\}$, $\delta(s_4,X,X)=\{s_4\}$. An illustration of the  construction of B2-HRFA $M^\prime_\textrm{E}$ in \autoref{thm:b2hrfa-VR} with the B2-HRFA $M_\textrm{E}$ is given by a transition diagram with only useful states in \figurename~\ref{fig:LsubEsubVR}. Notice that we allowed ourselves to omit states that are not (co-)reachable. Also, we allowed ourselves to merge the two states $(s_3,s_3,s)$ with $(s_3,s_3,s_1)$.
\end{example}
       
\begin{figure}[t]
		\centering
		% States
         \resizebox{0.9\textwidth}{!}{%
				\begin{tikzpicture}[shorten >=1pt, node distance=7.6em, on grid, auto,thick,initial text=]

% Define states
					\node[state, initial] (sa) {$s^\prime$};
					\node[state, right=of sa] (s) {$(s,s,s)$};
					\node[state, right=of s] (s3) {$(s_3,s_3,s)$};
                    \node[state, right=of s3] (s2) {$(s_3,s_2,s)$};
                    \node[state, right=of s2] (s1) {$(s_3,s_1,s_1)$};
                     \node[state, right=of s1,accepting] (s4) {$s_4$};
                  ;
% Transitions
					\draw[->] (sa) edge node[align=center] {$(X,X)$} (s);
                    \draw[->] (s) edge node[align=center] {$(\#,\#)$} (s3);
                     \draw[->] (s) edge [loop above]node[align=center] {$(X,X)$} (); 
                    \draw[->] (s3) edge node[align=center] {$(\bullet,X)$} (s2);
                    \draw[->] (s2) edge [loop above]node[align=center] {$(\bullet,\bullet)$} ();
                     \draw[->] (s2) edge node[align=center] {$(X,\bullet)$} (s1);
                      \draw[->] (s1) edge[bend left] node[align=center] {$(\#,\#)$} (s3);
                    \draw[->] (s1) edge node[align=center] {$(\#,\#)$} (s4);
                     \draw[->] (s4) edge [loop above]node[align=center] {$(X,X)$} ();
                    \end{tikzpicture}
                    }
		\caption{B2-HRFA $M^\prime_\textrm{E}$ that accepts  $L_{{\textrm{E}_{VR}}}$.} \label{fig:LsubEsubVR}
	\end{figure}  
     \begin{theorem}\label{thm:b2hrfa-rotation}
       Both  $\mathcal{L}\mathrm{(2\mbox{-}HRFA)}$ and $\mathcal{L}\mathrm{(B2\mbox{-}HRFA)}$ are closed under rotation by $180^\circ$.
     \end{theorem}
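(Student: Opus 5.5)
The plan is to swap the roles of the two heads. Under rotation by $180^\circ$, position $(i,j)$ of $P$ becomes position $(m+1-i,n+1-j)$ of $P^R$. The first head of an automaton working on $P^R$ therefore visits exactly the positions of $P$ that the second head visits on $P$, and in the same order: starting at the bottom-right corner, it moves right to left and upward, row after row. Symmetrically, the second head on $P^R$ traces the path of the first head on $P$. Given a 2-HRFA $M=(Q,\Sigma,\delta,s,F,\#,\square)$ with $L(M)=L$, I will define $M^R=(Q,\Sigma,\delta^R,s,F,\#,\square)$ by
\[ \delta^R(q,a,b)=\delta(q,b,a) \quad\text{for all } q\in Q,\ (a,b)\in\bigl((\Sigma\cup\{\varepsilon,\#\})\times(\Sigma\cup\{\varepsilon,\#\})\bigr)\setminus\{(\varepsilon,\varepsilon)\}. \]
The claim is then $L(M^R)=L^R$.

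For correctness I will set up a bijection on configurations. A configuration $(q,P_\square,\mu,\mu')$ of $M$ on $P$ corresponds to $(q,P_\square^R,m+1-\mu',m+1-\mu)$ of $M^R$ on $P^R$. I will check that valid configurations map to valid ones: the prefix of squares $\{\square\}^j$ in row $\mu$ becomes a suffix of squares in row $m+1-\mu$ of the rotated picture, which is exactly the shape required for the second head, and vice versa. The single-row case $\mu=\mu'$ is symmetric, with $j$ and $j'$ swapped. I will then go through each move type in the definition and show that a step $\vdash_M$ using $\delta(q,a,b)$ corresponds to a step $\vdash_{M^R}$ using $\delta^R(q,b,a)$. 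This covers two-head moves, one-head moves, one-head end-of-row moves combined with a letter read by the other head, end-of-row moves, and the $(\#,\#)$ move. In particular, the condition $\mu<\mu'-1$ is invariant under $(\mu,\mu')\mapsto(m+1-\mu',m+1-\mu)$, and so is $\mu<\mu'$. The initial configuration $(s,P,1,m)$ maps to $(s,P^R,1,m)$, and final configurations $(f,\square^n_m,\mu,\mu)$ map to final configurations. Induction on the length of the computation then gives both inclusions, and since $(L^R)^R=L$ no separate converse construction is needed.

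For B2-HRFA the same $\delta^R$ keeps every regular step a two-head step, so $M^R$ is again both-head stepping. The main obstacle, and the only real subtlety, is the exceptional final transition $(a,\varepsilon)_1$ (including $(\#,\varepsilon)_1$). It is asymmetric by definition: when a single position remains, only the first head reads it. In the rotated picture that same last position is the one the first head of $M^R$ reaches, because it is the unique cell or border symbol shared by both paths. So the head that remains ``first'' in the convention still reads it. I will therefore keep $\delta^R(q,(a,\varepsilon)_1)=\delta(q,(a,\varepsilon)_1)$ unswapped. I also need to check that the meeting point under the swapped-head correspondence lands on the same cell. This holds because both heads have advanced the same number of steps, so the remaining middle symbol of $P$ is the middle symbol of $P^R$. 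For an even number of rows the middle border $\#$ is likewise shared by both paths.
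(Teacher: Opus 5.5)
Your proposal is correct and is essentially the paper's proof. Like the paper, you swap every label $(a,b)$ to $(b,a)$ and leave the final first-head-only transitions $(a,\varepsilon)_1$ of a B2-HRFA unswapped; your configuration map $(q,P_\square,\mu,\mu')\mapsto(q,P_\square^R,m+1-\mu',m+1-\mu)$ supplies the verification the paper leaves implicit. One small nuance: for an even number of rows the heads do not share a border cell but sit on two distinct ones (the right $\#$ of row $m/2$ and the left $\#$ of row $m/2+1$), so after the last $(\#,\varepsilon)_1$ step your map is off by one row; both resulting configurations are all-$\square$ with $\mu=\mu'$ and hence final, so acceptance is unaffected.
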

           
    \begin{proof}
         Closure under rotation by $180^\circ$ can be achieved by interchanging each arc label $(a,b)$ of symbols between any two states to $(b,a)$ in the given 2-HRFA and B2-HRFA ($a,b\in \Sigma \cup \{\varepsilon, \#\}$) except the possible last transitions of B2-HRFA where only the first head is used. 
    \end{proof} 
    
 \begin{theorem} \label{thm:b2hrfa-HR}
          $\mathcal{L}\mathrm{(B2\mbox{-}HRFA)}$ is closed under horizontal reflection.
     \end{theorem}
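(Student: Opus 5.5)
The plan is to reduce the statement to the two closure results just proved. Since $P_{HR}=(P^R)_{VR}$ for every picture $P=[a_{i,j}]_{m,n}$, we get $L_{HR}=(L^R)_{VR}$ for every picture language $L$. To check the identity, note that $P^R$ has entry $a_{m+1-i,n+1-j}$ in position $(i,j)$; applying vertical reflection turns this into $a_{m+1-i,j}$, which is exactly the $(i,j)$ entry of $P_{HR}$. With this identity, Theorem~\ref{thm:b2hrfa-rotation} gives $L^R\in\mathcal{L}\mathrm{(B2\mbox{-}HRFA)}$ for $L\in\mathcal{L}\mathrm{(B2\mbox{-}HRFA)}$, and Theorem~\ref{thm:b2hrfa-VR} then gives $(L^R)_{VR}=L_{HR}\in\mathcal{L}\mathrm{(B2\mbox{-}HRFA)}$.

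The steps, in order:
\begin{enumerate}
\item Verify the identity $P_{HR}=(P^R)_{VR}$ entrywise, as above, and lift it from pictures to languages.
\item Fix a B2-HRFA $M$ with $L=L(M)$, and build the B2-HRFA $M_1$ for $L^R$ by swapping arc labels $(a,b)$ to $(b,a)$, as in the proof of Theorem~\ref{thm:b2hrfa-rotation}.
\item Apply the construction $M_1\mapsto M_1'$ from the proof of Theorem~\ref{thm:b2hrfa-VR}, which yields $L(M_1')=L_{HR}$.
\end{enumerate}

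The main obstacle is not the composition, which is immediate, but making sure the two cited constructions apply in their full generality. Two points need care. First, the exceptional transitions $(a,\varepsilon)_1$ and $(\#,\varepsilon)_1$ must be handled correctly when labels are swapped. The rotation proof explicitly keeps these last single-head steps unchanged; I would double-check that for an odd number of columns in the meeting row, the symbol left for the first head is the same central symbol in both $P$ and $P^R$, so that the construction is sound. Second, the middle-row handling in the vertical-reflection construction, for odd row counts, must remain valid for automata produced by the rotation step.

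If either concern fails, the fallback is a direct construction for horizontal reflection. In $P_{HR}$ the first head traverses the original rows $m,m-1,\dots$ from left to right. So I would simulate $M$ with the two heads' roles exchanged, each still moving in its own original horizontal direction but mapped onto the opposite row. This requires reversing the within-row reading direction for both heads, which calls for the same guess-and-verify technique on pairs of rows, guessing the end-of-row state and running $\delta$ backwards, that is used in the proof of Theorem~\ref{thm:b2hrfa-VR}.
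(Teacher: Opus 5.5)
Your proof is correct and matches the paper's: both obtain closure under horizontal reflection by composing the closure results for vertical reflection (Theorem~\ref{thm:b2hrfa-VR}) and for $180^\circ$ rotation (Theorem~\ref{thm:b2hrfa-rotation}). The paper writes $P_{HR}=(P_{VR})^R$ rather than your $(P^R)_{VR}$, but these coincide since $180^\circ$ rotation commutes with vertical reflection, and once the two cited theorems are taken as given, your extra checks on the $(a,\varepsilon)_1$ steps and your fallback construction are not needed.
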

     \begin{proof}
         Horizontal reflection is nothing but vertical reflection followed by rotation by $180^\circ$. Therefore, closure under horizontal reflection follows from \autoref{thm:b2hrfa-VR} and \autoref{thm:b2hrfa-rotation}.
     \end{proof}

     \begin{theorem}\label{thm:b2hrfa-rotation90}
       Neither $\mathcal{L}\mathrm{(B2\mbox{-}HRFA)}$ nor $\mathcal{L}\mathrm{(2\mbox{-}HRFA)}$ is closed under rotation by $\pm 90^\circ$ nor under transpose.
     \end{theorem}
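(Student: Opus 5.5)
We need a language in the class whose transpose (or 90° rotation) is not in the class. Since both classes are closed under rotation by $180^\circ$, and rotation by $\pm 90^\circ$ equals transpose composed with a vertical or horizontal reflection, it is enough to exhibit a single witness $L$ that lies in $\mathcal{L}\mathrm{(B2\mbox{-}HRFA)}\subseteq\mathcal{L}\mathrm{(2\mbox{-}HRFA)}$ and whose transpose, and each $90^\circ$ rotation, lies outside $\mathcal{L}\mathrm{(2\mbox{-}HRFA)}$. For B2-HRFA, closure under $VR$ and $HR$ (\autoref{thm:b2hrfa-VR} and \autoref{thm:b2hrfa-HR}) then makes all three operations equivalent. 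For 2-HRFA I would check each image separately, since closure under $VR$ is not available there.

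For the witness I would take $L=\{w\ominus w\mid w\in\{a,b\}^+\}$ transposed, namely the two-column language $T(L_{col})=\{(a[1]\obar a[1])\ominus\cdots\ominus(a[m]\obar a[m])\}$. This is exactly $L_{w\obar w}$ from \autoref{ex:Lww}, so it is accepted by an RFA, and by \autoref{thm:rfa-b2hrfa-2hrfa} it is accepted by a B2-HRFA and hence by a 2-HRFA. Its transpose is $L_{col}$, and the proof of \autoref{thm:2HRFA-vs-RMG} already shows $L_{col}\notin\mathcal{L}\mathrm{(RPDA)}$, hence $L_{col}\notin\mathcal{L}\mathrm{(2\mbox{-}HRFA)}$ by \autoref{thm:2-hrfavsRPDA}. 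This settles non-closure under transpose for both classes.

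For rotation by $\pm90^\circ$, the rotated images of $L_{w\obar w}$ are $(L_{col})_{VR}$ and $(L_{col})_{HR}$. Because $L_{col}$ consists of two identical rows, $(L_{col})_{HR}=L_{col}$. Also $(L_{col})_{VR}=\{w^R\ominus w^R\}=L_{col}$, since $w\mapsto w^R$ is a bijection of $\{a,b\}^+$. So both rotations give $L_{col}$ again, which is outside both classes by the argument above.

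The main obstacle is justifying $L_{col}\notin\mathcal{L}\mathrm{(2\mbox{-}HRFA)}$ directly, in case one distrusts the citation chain through RPDA. My first attempt would be a flattening argument. On two-row inputs a 2-HRFA reads row 1 left to right with head 1 and row 2 right to left with head 2. This is a 2-HFA on the string $w_1\#w_2$, so $\{w\#w\}$ would be linear, contradicting the standard fact that the copy language is not linear (indeed not context-free). This needs a lemma that the computation of a 2-HRFA on two-row pictures corresponds to a 2-HFA on the flattened string. That lemma is routine, since the heads traverse the flattened string from its two ends, with the boundary symbol $\#$ consumed by one head.
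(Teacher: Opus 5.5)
Your main argument is essentially the paper's. You use the same witness $L_{w\obar w}$ from Example~\ref{ex:Lww}, which lies in $\mathcal{L}(\mathrm{RFA})\subseteq\mathcal{L}(\mathrm{B2\mbox{-}HRFA})\subseteq\mathcal{L}(\mathrm{2\mbox{-}HRFA})$. Its transpose and both $90^\circ$ rotations equal $L_{col}$ up to renaming letters, and $L_{col}$ is excluded via $L_{col}\notin\mathcal{L}(\mathrm{RPDA})$ together with Theorem~\ref{thm:2-hrfavsRPDA}. Your explicit check that $(L_{col})_{VR}=(L_{col})_{HR}=L_{col}$ fills in a step the paper leaves implicit.

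The optional direct argument in your last paragraph would fail, however. Simulating a 2-HRFA on two-row inputs by a 2-HFA on $u\#v$ is fine, but that 2-HFA cannot see whether $|u|=|v|$. You therefore only obtain a linear $L'$ with $L'\cap\{u\#v\mid |u|=|v|\}=\{w\#w\mid w\in\{a,b\}^+\}$, and non-linearity of the copy language says nothing about such an $L'$. The equal-length side condition genuinely adds power. Take the 2-HRFA with $\delta(q_0,a,\varepsilon)=\{q_0\}$, $\delta(q_0,b,\varepsilon)=\{q_1\}$, $\delta(q_1,a,a)=\{q_1\}$, $\delta(q_1,\#,b)=\{q_2\}$, $\delta(q_2,a,\varepsilon)=\{q_2\}$ and $F=\{q_2\}$. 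It accepts $\{a^iba^j\ominus a^iba^j\mid i,j\geq 0\}$, yet the flattening $\{a^iba^j\#a^iba^j\}$ is not even context-free. So the ``routine lemma'' you describe, that two-row computations yield linear flattened languages, is false as stated. What is actually needed is the stronger fact behind the citation in the proof of Theorem~\ref{thm:2HRFA-vs-RMG}: $\{w\#w\}$ is not accepted even by a pushdown automaton equipped with a one-turn blind counter, and that counter is exactly what enforces $|u|=|v|$. Alternatively, you would need a pumping argument for linear languages that explicitly handles the equal-length condition. Since your main line does not depend on this fallback, the proof as a whole stands.
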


\begin{proof}
    By \autoref{ex:Lww} and \autoref{thm:rfa-b2hrfa-2hrfa}, there is a language in $\mathcal{L}\mathrm{(B2\mbox{-}HRFA)}$ whose rotation by $\pm 90^\circ$ is not in  $\mathcal{L}\mathrm{(2\mbox{-}HRFA)}$ by the proof of \autoref{thm:2HRFA-vs-RMG}. The same argument applies to transpose. A self-contained example for $\mathcal{L}\mathrm{(B2\mbox{-}HRFA)}$ is presented in \autoref{exa:0..010..0}.
\end{proof}

Based on these arguments, (non-)closure results for other geometric unary operations can be derived using the well-known properties of the dihedral group $D_8$ and its subgroups, see \cite{FerParTho2018} and any textbook on group theory.  Let us at least briefly look into decidability problems connected to B2-HRFA.

\begin{remark}
By the proof of \autoref{thm:rfa-b2hrfa-2hrfa}, decidability problems cannot be simpler than those for RFA as studied in \cite{FerPST2018}. However, there is also a simple way how to re-interpret arrays over the alphabet $\Sigma$ as arrays over $\Sigma\times\Sigma\cup \Sigma$.
     For a given input array $A$ of size $m\times (n+2)$, say $\begin{array}{ccccc}\#&a_{1,1} & \cdots & a_{1,n} &\#\\ 
		\vdots&\vdots & \ddots & \vdots&\vdots \\ 
		\#&a_{m,1} & \cdots & a_{m,n}&\#
	\end{array}$, we construct a new array $B$ by compressing the original array $A$ in the following manner whose size is $\dfrac{m}{2}\times (n+1)$ if $m$ is even and $\dfrac{m+1}{2}\times (n+1)$ if $m$ is odd. That is,

    $$B=\begin{array}{ccccc}
    \left(\begin{smallmatrix}
         a_{1,1} \\
         a_{m,n}          
    \end{smallmatrix}\right)
			 &   \left(\begin{smallmatrix}
         a_{1,2} \\
         a_{m,n-1}          
    \end{smallmatrix}\right)
   & \cdots&  \left(\begin{smallmatrix}
         a_{1,n} \\
         a_{m,1}          
    \end{smallmatrix}\right) 
    &  \left(\begin{smallmatrix}
         \# \\
         \#          
    \end{smallmatrix}\right) \\
     \left(\begin{smallmatrix}
         a_{2,1} \\
         a_{m-1,n}          
    \end{smallmatrix}\right)
			 &  \left(\begin{smallmatrix}
         a_{2,2} \\
         a_{m-1,n-1}          
    \end{smallmatrix}\right)
   & \cdots&  \left(\begin{smallmatrix}
         a_{2,n} \\
         a_{m-1,1}          
    \end{smallmatrix}\right) 
    &\left(\begin{smallmatrix}
         \# \\
         \#          
    \end{smallmatrix}\right)\\
            \vdots & \vdots & \ddots & \vdots \\ 
			 \left(\begin{smallmatrix}
         a_{\frac{m}{2},1} \\
         a_{\frac{m}{2}+1,n}          
    \end{smallmatrix}\right)
             &  \left(\begin{smallmatrix}
         a_{\frac{m}{2},2} \\
         a_{\frac{m}{2}+1,n-1}         
    \end{smallmatrix}\right)
             & \cdots&  \left(\begin{smallmatrix}
         a_{\frac{m}{2},n} \\
         a_{\frac{m}{2}+1,1}          
    \end{smallmatrix}\right)
    &\left(\begin{smallmatrix}
         \# \\
         \epsilon         
    \end{smallmatrix}\right)
		\end{array}\quad \text{if $m$ is even and}$$ 
        $$B=\begin{array}{ccccc}
    \left(\begin{smallmatrix}
         a_{1,1} \\
         a_{m,n}          
    \end{smallmatrix}\right)
			 &   \left(\begin{smallmatrix}
         a_{1,2} \\
         a_{m,n-1}          
    \end{smallmatrix}\right)
   & \cdots&  \left(\begin{smallmatrix}
         a_{1,n} \\
         a_{m,1}          
    \end{smallmatrix}\right) 
    &  \left(\begin{smallmatrix}
         \# \\
         \#          
    \end{smallmatrix}\right) \\
     \left(\begin{smallmatrix}
         a_{2,1} \\
         a_{m-1,n}          
    \end{smallmatrix}\right)
			 &  \left(\begin{smallmatrix}
         a_{2,2} \\
         a_{m-1,n-1}          
    \end{smallmatrix}\right)
   & \cdots&  \left(\begin{smallmatrix}
         a_{2,n} \\
         a_{m-1,1}          
    \end{smallmatrix}\right) 
    &\left(\begin{smallmatrix}
         \# \\
         \#          
    \end{smallmatrix}\right)\\
            \vdots & \vdots & \ddots & \vdots \\ 
    a_{\frac{m+1}{2},1}&  a_{\frac{m+1}{2},2}& \cdots&  a_{\frac{m+1}{2},n}&\#
		\end{array}\quad \text{if $m$ is odd.}$$
        This transformation allows us to interpret the work of a B2-HRFA as the work of a RFA (over a different alphabet), so that (e.g.) the non-emptiness problem is (also) \textsf{NP}-complete, etc.
The corresponding decision problems for 2-HRFA are widely open. Let us finally remark that we know (unpublished) that the non-emptiness problem is undecidable for RPDA.  Also, it is known that this problem is decidable for CFMG.
\end{remark}

 \begin{figure}[t]
    \begin{center}
    \resizebox{0.5\linewidth}{!}{%
       \begin{tikzpicture}[>=latex, node distance=9em, scale=0.85, every node/.style={}]

		% Nodes
        \node (RFA)   {$\mathcal{L} (\mathrm{RFA})$}; 
        \node (TRMG)     [left of=RFA] {$\mathtt{T}(\mathcal{L} (\mathrm{RMG}))$}; 
        % \node (RFA0)    [right of=RFA] {}; 
        \node (B2HRFA)[above of=RFA]{$\mathcal{L} (\mathrm{B2\mbox{-}HRFA})$};
        \node (TCFMG) [above of =TRMG]{$\mathtt{T}(\mathcal{L} (\mathrm{CFMG}))$};
       \node (RFA1)[right of=B2HRFA]{$\mathcal{L} (\mathrm{B2\mbox{-}HRDFA})$};
        %\node (CFMG0)[above of=TCFMG]{};
        \node (2HRFA)[above of=B2HRFA]{$\mathcal{L} (\mathrm{2\mbox{-}HRFA})$};
        \node (RPDA)[above of=2HRFA]{$\mathcal{L} (\mathrm{RPDA})$};

		  \draw[-stealth,shorten >=4pt] (RFA) -- (B2HRFA);
        \draw[-stealth,shorten >=4pt] (TRMG) -- (TCFMG);
        \draw[-stealth,shorten >=4pt] (TCFMG) -- (RPDA);
        \draw[-stealth,shorten >=4pt] (B2HRFA) -- (2HRFA);
        \draw[-stealth,shorten >=4pt] (2HRFA) -- (RPDA);       
			
		% Equal signs
		 \path (TRMG.east) -- node[ inner sep=2pt] {$=$} (RFA.west);

\path (B2HRFA.east) -- node[fill=white, inner sep=2pt] {$=$} (RFA1.west);
		\end{tikzpicture}
        }
    \end{center}		
    \caption{Hierarchy among classes of array languages recognized by two-dimensional automaton models mentioned in this paper. An arrow $\mathcal{L}(X)\rightarrow \mathcal{L}(Y)$ indicates the relation $\mathcal{L}(X)\subsetneq \mathcal{L}(Y)$.  Families that are not connected by directed path are incomparable.  \label{fig:Hierarchy}}
\end{figure}

   \section{Conclusion} 
     We have explored the role of two heads and constrained movement patterns in returning finite‑state models for two‑dimensional picture languages. Our study of 2‑HRFA and its both‑head stepping variant B2-HRFA demonstrates that interaction between head directions and synchronization can yield strictly stronger language classes than by RFA, while still remaining below the generative power of returning pushdown automata (see \autoref{fig:Hierarchy}). The incomparability with both regular and context‑free matrix grammars suggests that head‑based scanning strategies capture a different type of structural dependency from the derivation mechanisms employed by matrix grammars. As directions for future work, we plan to investigate decidability questions further, and extend the study to two-head boustrophedon finite automata (2-HBFA) whose very definition already poses some problems.  

%\bibliographystyle{eptcs}
%\bibliography{generic}

\end{document}